\title{Almost-Polynomial Ratio ETH-Hardness of Approximating {\sc Densest $k$-Subgraph}}
\author{
Pasin Manurangsi\thanks{Email: \texttt{pasin@berkeley.edu}. This material is based upon work supported by the National Science Foundation under Grants No. CCF 1540685 and CCF 1655215} \vspace{-0.5em}\\
UC Berkeley
}
\begin{document}

\maketitle
\thispagestyle{empty}

\begin{abstract}
In the {\sc Densest $k$-Subgraph} ({\sc D$k$S}) problem, given an undirected graph $G$ and an integer $k$, the goal is to find a subgraph of $G$ on $k$ vertices that contains maximum number of edges. Even though Bhaskara~\etal's state-of-the-art algorithm for the problem achieves only $O(n^{1/4 + \varepsilon})$ approximation ratio, previous attempts at proving hardness of approximation, including those under average case assumptions, fail to achieve a polynomial ratio; the best ratios ruled out under any worst case assumption and any average case assumption are only any constant (Raghavendra and Steurer) and $2^{O(\log^{2/3} n)}$ (Alon~\etal) respectively.

In this work, we show, assuming the exponential time hypothesis (ETH), that there is no polynomial-time algorithm that approximates {\sc D$k$S} to within $n^{1/(\loglog n)^c}$ factor of the optimum, where $c > 0$ is a universal constant independent of $n$. In addition, our result has \emph{perfect completeness}, meaning that we prove that it is ETH-hard to even distinguish between the case in which $G$ contains a $k$-clique and the case in which every induced $k$-subgraph of $G$ has density at most $1/n^{-1/(\loglog n)^c}$ in polynomial time.

Moreover, if we make a stronger assumption that there is some constant $\varepsilon > 0$ such that no subexponential-time algorithm can distinguish between a satisfiable 3SAT formula and one which is only $(1 - \varepsilon)$-satisfiable (also known as Gap-ETH), then the ratio above can be improved to $n^{f(n)}$ for any function $f$ whose limit is zero as $n$ goes to infinity (i.e. $f \in o(1)$).
\end{abstract}

\newpage
\setcounter{page}{1}

\section{Introduction}



In the {\sc Densest $k$-Subgraph} ({\sc D$k$S}) problem, we are given an undirected graph $G$ on $n$ vertices and a positive integer $k \leqs n$. The goal is to find a set $S$ of $k$ vertices such that the induced subgraph on $S$ has maximum number of edges. Since the size of $S$ is fixed, the problem can be equivalently stated as finding a $k$-subgraph (i.e. subgraph on $k$ vertices) with maximum density where density\footnote{It is worth noting that sometimes density is defined as $|E(S)|/|S|$. For {\sc D$k$S}, both definitions of density result in the same objective since $|S| = k$ is fixed. However, our notion is more convenient to deal with as it always lies in $[0, 1]$.} of the subgraph induced on $S$ is $|E(S)|/\binom{|S|}{2}$ and $E(S)$ denotes the set of all edges among the vertices in $S$.

{\sc Densest $k$-Subgraph}, a natural generalization of {\sc $k$-Clique}~\cite{Karp72}, was first formulated and studied by Kortsarz and Peleg~\cite{KP93} in the early 90s. Since then, it has been the subject of intense study in the context of approximation algorithm and hardness of approximation~\cite{FS97,SW98,FL01,FPK01,AHI02,Feige02,Kho06,GL09,RS10,BCCFV10,AAMMW11,BCVGZ12,Barman15,BKRW17}. Despite this, its approximability still remains wide open and is considered by some to be an important open question in approximation algorithms~\cite{BCCFV10,BCVGZ12,BKRW17}.

On the approximation algorithm front, Kortsarz and Peleg~\cite{KP93}, in the same work that introduced the problem, gave a polynomial-time $\tilde O(n^{0.3885})$-approximation algorithm for {\sc D$k$S}. Feige, Kortsarz and Peleg~\cite{FPK01} later provided an $O(n^{1/3 - \delta})$-approximation for the problem for some constant $\delta \approx 1/60$. This approximation ratio was the best known for almost a decade
\footnote{Around the same time as Bhaskara \etal's work~\cite{BCCFV10}, Goldstein and Langberg~\cite{GL09} presented an algorithm with approximation ratio $O(n^{0.3159})$, which is slightly better than~\cite{FPK01} but is worse than~\cite{BCCFV10}.} 
until Bhaskara~\etal~\cite{BCCFV10} invented a log-density based approach which yielded an $O(n^{1/4 + \varepsilon})$-approximation for any constant $\varepsilon > 0$. This remains the state-of-the-art approximation algorithm for {\sc D$k$S}.

While the above algorithms demonstrate the main progresses of approximations of {\sc D$k$S} in general case over the years, many special cases have also been studied. Most relevant to our work is the case where the optimal $k$-subgraph has high density, in which better approximations are known~\cite{FS97,ST05,MM15,Barman15}. The first and most representative algorithm of this kind is that of Feige and Seltser~\cite{FS97}, which provides the following guarantee: when the input graph contains a $k$-clique, the algorithm can find an $(1 - \varepsilon)$-dense $k$-subgraph in $n^{O(\log n / \varepsilon)}$ time. We will refer to this problem of finding densest $k$-subgraph when the input graph is promised to have a $k$-clique \emph{{\sc Densest $k$-Subgraph} with perfect completeness}.


Although many algorithms have been devised for {\sc D$k$S}, relatively little is known regarding its hardness of approximation. While it is commonly believed that the problem is hard to approximate to within some polynomial ratio~\cite{AAMMW11,BCVGZ12}, not even a constant factor NP-hardness of approximation is known. To circumvent this, Feige~\cite{Feige02} came up with a hypothesis that a random 3SAT formula is hard to refute in polynomial time and proved that, assuming this hypothesis, {\sc D$k$S} is hard to approximate to within some constant factor. 

Alon~\etal~\cite{AAMMW11} later used a similar conjecture regarding random $k$-AND to rule out polynomial-time algorithms for {\sc D$k$S} with any constant approximation ratio. Moreover, they proved hardnesses of approximation of {\sc D$k$S} under the following \emph{Planted Clique Hypothesis}~\cite{Jer92,Kuc95}: there is no polynomial-time algorithm that can distinguish between a typical Erdős–Rényi random graph $\cG(n, 1/2)$ and one in which a clique of size polynomial in $n$ (e.g. $n^{1/3}$) is planted. Assuming this hypothesis, Alon \etals proved that no polynomial-time algorithm approximates {\sc D$k$S} to within any constant factor. They also showed that, when the hypothesis is strengthened to rule out not only polynomial-time but also super-polynomial time algorithms for the Planted Clique problem, their inapproximability guarantee for {\sc D$k$S} can be improved. In particular, if no $n^{O(\sqrt{\log n})}$-time algorithm solves the Planted Clique problem, then $2^{O(\log^{2/3} n)}$-approximation for {\sc D$k$S} cannot be achieved in polynomial time. 

There are also several inapproximability results of {\sc D$k$S} based on worst-case assumptions. Khot~\cite{Kho06} showed, assuming NP $\not\subseteq$ BPTIME($2^{n^\varepsilon}$) for some constant $\varepsilon > 0$, that no polynomial-time algorithm can approximate {\sc D$k$S} to within $(1 + \delta)$ factor where $\delta > 0$ is a constant depending only on $\varepsilon$; the proof is based on a construction of a ``quasi-random'' PCP, which is then used in place of a random 3SAT in a reduction similar to that from~\cite{Feige02}. 

While no inapproximability of {\sc D$k$S} is known under the Unique Games Conjecture, Raghavendra and Steurer~\cite{RS10} showed that a strengthened version of it, in which the constraint graph is required to satisfy a ``small-set expansion'' property, implies that {\sc D$k$S} is hard to approximate to within any constant ratio.

Recently, Braverman \etal~\cite{BKRW17}, showed, under the exponential time hypothesis (ETH), which will be stated shortly, that, for some constant $\varepsilon > 0$, no $n^{\tilde O(\log n)}$-time algorithm can approximate {\sc Densest $k$-Subgraph} with perfect completeness to within $(1 + \varepsilon)$ factor. It is worth noting here that their result matches almost exactly with the previously mentioned Feige-Seltser algorithm~\cite{FS97}.

Since none of these inapproximability results achieve a polynomial ratio, there have been efforts to prove better lower bounds for more restricted classes of algorithms. For example, Bhaskara \etal~\cite{BCVGZ12} provided polynomial ratio lower bounds against strong SDP relaxations of {\sc D$k$S}. Specifically, for the Sum-of-Squares hierarchy, they showed integrality gaps of $n^{2/53 - \varepsilon}$ and $n^{\varepsilon}$ against $n^{\Omega(\varepsilon)}$ and $n^{1 - O(\varepsilon)}$ levels of the hierarchy respectively. (See also~\cite{M-thesis, CMMV17} in which $2/53$ in the exponent was improved to $1/14$.) Unfortunately, it is unlikely that these lower bounds can be translated to inapproximability results and the question of whether any polynomial-time algorithm can achieve subpolynomial approximation ratio for {\sc D$k$S} remains an intriguing open question.

\subsection{Our Results}


In this work, we rule out, under the exponential time hypothesis (i.e. no subexponential time algorithm can solve 3SAT; see Hypothesis~\ref{hyp:eth}), polynomial-time approximation algorithms for {\sc D$k$S} (even with perfect completeness) with slightly subpolynomial ratio:
\begin{theorem} \label{thm:main-eth}
There is a constant $c > 0$ such that, assuming ETH, no polynomial-time algorithm can, given a graph $G$ on $n$ vertices and a positive integer $k \leqs n$, distinguish between the following two cases:
\begin{itemize}
\item There exist $k$ vertices of $G$ that induce a $k$-clique.
\item Every $k$-subgraph of $G$ has density at most $n^{-1/(\loglog n)^c}$.
\end{itemize}
\end{theorem}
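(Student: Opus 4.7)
My plan is a three-stage pipeline: (i) reduce 3SAT to a Label Cover (equivalently, 2-prover 1-round game, or MaxCover) instance with perfect completeness and constant soundness; (ii) amplify the soundness while keeping the instance size polynomial; and (iii) translate the amplified Label Cover instance into a DkS graph whose $k$-cliques correspond to globally consistent labelings.

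For stage (i) I would use a standard PCP reduction (e.g.\ Dinur-style gap amplification) to turn the 3SAT instance on $N$ variables into a Label Cover with constant soundness, polynomial blow-up, and perfect completeness. For stage (iii) I would use the well-known graph gadget in which vertices are (left-vertex, labeling-of-neighborhood) pairs and edges are placed between pairs that agree on the shared projection; setting $k$ equal to the number of left vertices, a satisfying labeling produces $k$ mutually consistent vertices and hence a $k$-clique, while any Label Cover soundness bound $s$ translates into a density bound of roughly $s$ on the best $k$-subgraph. This way perfect completeness is automatic, and the achievable DkS gap is essentially $1/s$ versus the number of vertices $n \approx N^{O(1)} \cdot (\text{alphabet size})$.

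The crux is stage (ii), where the almost-polynomial ratio is produced. Standard $\ell$-fold parallel repetition would give soundness $s \leqs s_0^{\Omega(\ell)}$ but inflate the alphabet to $|\Sigma|^{\ell}$; to obtain gap $n^{1/(\loglog n)^c}$ from polynomial-size inputs this forces $\ell = \Theta((\loglog n)^c)$, which alone already makes the instance quasi-polynomial. To recover polynomial size while retaining almost-polynomial soundness, I would apply the birthday-repetition technique of Aaronson-Impagliazzo-Moshkovitz, recently exploited by Braverman~\etal~\cite{BKRW17} in the DkS context: instead of taking all $\ell$-tuples of queries, randomly sample roughly $\sqrt{\ell}$ queries on each side, so that by the birthday paradox every desired pair is hit with constant probability and the soundness amplification roughly matches that of full parallel repetition at a much smaller alphabet cost. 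The appropriate choice of $\ell$ relative to $N$, and the resulting $(\loglog n)^c$ exponent, come out of balancing these three parameters (soundness, alphabet, number of constraints) against the ETH constraint that the whole pipeline run in time polynomial in $N$.

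The main obstacle I anticipate is making the soundness analysis of the amplified Label Cover go through with \emph{perfect} completeness: classical parallel repetition theorems (Raz, Holenstein, Rao) do preserve value~$1$, but with birthday repetition and the downstream graph gadget the analysis must be carried out carefully so that (a) no completeness loss creeps in when we sample queries and identify consistent vertices, and (b) the soundness in the DkS graph --- which is a statement about \emph{all} $k$-subgraphs, not only those arising from valid labelings --- is still bounded by the amplified Label Cover soundness up to constant factors. I would expect to spend most of the technical work proving a perfect-completeness soundness theorem for the birthday-repeated Label Cover in the precise form needed by the DkS reduction, and to tune $\ell$ so that the resulting graph has $n = N^{O(1)}$ vertices and density gap $n^{1/(\loglog n)^c}$, making ETH directly applicable.
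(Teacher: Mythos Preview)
Your proposed pipeline --- Label Cover from a PCP, soundness amplification via birthday repetition, then an FGLSS-style graph gadget --- is essentially the route of Braverman~\etal~\cite{BKRW17}, and the paper explains explicitly why it stalls at a constant ratio rather than reaching $n^{1/(\loglog n)^c}$. The obstacle you yourself flag in point~(b) of your last paragraph is not a technicality to be ``carried out carefully'': there is no implication of the form ``Label Cover soundness $s$ $\Rightarrow$ every $k$-subgraph of the gadget graph has density $O(s)$'' once the alphabet is non-boolean. The paper exhibits a concrete counterexample (Appendix~\ref{app:counter-bkrw}): a non-boolean 2CSP with value bounded well away from $1$ whose associated partial-assignment graph nonetheless contains a biclique of size at least $\binom{n}{\ell}$, so the best density gap obtainable from that style of reduction is a factor of two. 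Any Label Cover with a gap and perfect completeness necessarily has non-boolean alphabet (a satisfiable boolean 2CSP is polynomial-time solvable), and birthday repetition only enlarges the alphabet further, so your stage~(iii) soundness claim is exactly where the argument breaks.

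The paper's approach differs in a way that is essential, not cosmetic. It stays with \emph{boolean} gap-3SAT throughout (obtained from ETH via Dinur's nearly-linear PCP) and builds a graph $G_{\phi,\ell}$ whose vertices are partial boolean assignments to $\ell$ of the $n$ variables; the ``repetition'' and the graph gadget are fused into a single object rather than composed. Soundness is proved not by decoding a labeling from a dense subgraph but by \emph{counting}: one upper-bounds the number of labelled copies of $K_{t,t}$ in $G_{\phi,\ell}$ (Lemma~\ref{lem:ctt-bound}) and invokes Alon's lemma that an $\alpha$-dense $N$-vertex graph must contain at least $(\alpha/2)^{t^2}N^{2t}$ such copies. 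The $K_{t,t}$ count is controlled by a case analysis on variables (good/bad/terrible) together with a birthday-collision bound; the single place booleanness is used is the inequality $|A|+|B|\leqs 2n$ for the flattened assignment sets on the two sides of a putative biclique --- precisely the step that fails when $|\Sigma|\geqs 3$. With $\ell = m/\polylog m$ this yields density at most $2^{-\Omega(\ell^4/n^3)} = N^{-1/(\loglog N)^{O(1)}}$ on a graph with $N = 2^{o(m)}$ vertices, which under ETH gives the theorem.
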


If we assume a stronger assumption that it takes exponential time to even distinguish between a satisfiable 3SAT formula and one which is only $(1 - \varepsilon)$-satisfiable for some constant $\varepsilon > 0$ (aka Gap-ETH; see Hypothesis~\ref{hyp:gap-eth}), the ratio can be improved to $n^{f(n)}$ for any
\footnote{Recall that $f \in o(1)$ if and only if $\lim_{n \to \infty} f(n) = 0$.} 
$f \in o(1)$:
\begin{theorem} \label{thm:main-gap-eth}
For every function $f \in o(1)$, assuming Gap-ETH, no polynomial-time algorithm can, given a graph $G$ on $n$ vertices and a positive integer $k \leqs n$, distinguish between the following two cases:
\begin{itemize}
\item There exist $k$ vertices of $G$ that induce a $k$-clique.
\item Every $k$-subgraph of $G$ has density at most $n^{-f(n)}$.
\end{itemize}
\end{theorem}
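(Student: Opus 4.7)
The plan is to reduce from Gap-3SAT (hard under Gap-ETH) to {\sc Densest $k$-Subgraph} with perfect completeness, via an intermediate 2-prover Label Cover game and an FGLSS-type construction, with parameters tuned to the desired function $f$. Given the Gap-3SAT formula $\varphi$ on $n$ variables with promise gap $(1, 1-\varepsilon)$, I would first convert $\varphi$ into a bipartite projection Label Cover $\mathcal{L}$ with perfect completeness, constant alphabet, and constant soundness $s<1$, using the standard clause--variable reduction. I would then amplify the soundness by applying parallel repetition $t = t(n)$ times to obtain $\mathcal{L}^{\otimes t}$, whose soundness is $s^{\Omega(t)}$ by Raz's (or Rao's projection-game) theorem, while perfect completeness is preserved. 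Finally, I would build the DkS graph $G$ whose vertices are (question, label) pairs for $\mathcal{L}^{\otimes t}$ and whose edges connect pairs of (question, label) assignments that satisfy the corresponding constraint; the target size is $k$ equal to the total number of questions.

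Completeness is immediate: any satisfying strategy for $\mathcal{L}^{\otimes t}$ selects one label per question, yielding a $k$-clique in $G$. The soundness analysis---showing every $k$-subgraph of $G$ has density at most $N^{-f(N)}$ where $N = |V(G)|$---is the main technical obstacle. The naive bound, valid only for $k$-subgraphs corresponding to labelings (one vertex per question), gives density at most the value of $\mathcal{L}^{\otimes t}$, namely $s^{\Omega(t)}$. The difficulty lies in ruling out more general $k$-subgraphs that ``cheat'' by selecting several labels for a handful of questions; to handle these I expect to exploit the projection property of $\mathcal{L}^{\otimes t}$ (each label at one question has at most one satisfying partner at another), combined with an averaging argument over the biregular constraint graph, and possibly a subsampling step that enforces balance in the FGLSS graph and prevents small sets of questions from dominating the density.

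For the parameterization: the graph $G$ has $N = \Theta((QA)^t)$ vertices, so the density bound $s^{\Omega(t)}$ is at most $N^{-f(N)}$ once $t$ is sufficiently large relative to $f$, since $\log(QA)$ is a constant depending only on the Gap-3SAT parameters. The reduction runs in $\mathrm{poly}(N) = 2^{O(t \log QA)}$ time, and Gap-ETH rules out $2^{o(n)}$-time algorithms for $\varphi$; choosing $t = o(n)$ ensures $N = 2^{o(n)}$, so any polynomial-time DkS algorithm on $G$ composed with the reduction would yield a $2^{o(n)}$-time Gap-3SAT algorithm, a contradiction. The flexibility in choosing $t$ easily accommodates any $f \in o(1)$, which is why Gap-ETH suffices for the near-polynomial ratio strengthening over Theorem~\ref{thm:main-eth}.
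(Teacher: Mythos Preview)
Your proposal has a genuine gap at precisely the point you flag as ``the main technical obstacle'': the soundness of the FGLSS-type graph for {\sc Densest $k$-Subgraph}. This is not a detail to be filled in later---it is the entire difficulty, and the construction you describe (vertices $=$ (question, label) pairs, edges between non-conflicting pairs) provably fails. In a bipartite Label Cover there are no constraints between two left questions, so any two vertices $(u,a)$ and $(u',a')$ with $u\ne u'$ both on the left are automatically adjacent. Picking one arbitrary label for each of the $Q_L^t$ left questions already yields a clique of size $Q_L^t \approx k/2$, hence a $k$-subgraph of density at least $1/4$, \emph{regardless} of the value of $\mathcal L^{\otimes t}$. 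The projection property is irrelevant here because these vertices share no constraint at all; an unspecified ``subsampling step'' does not address it either. The cheating you worry about is fatal, not merely technical.

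There is also a parameter error: $\log(QA)$ is not a constant. The number of questions $Q$ in the clause--variable game is $\Theta(m)$, so $\log(QA)=\Theta(\log m)$; even granting a perfect soundness transfer, $s^{\Omega(t)}$ would only translate to $N^{-\Omega(1/\log m)}$, and the relationship between $m$ and $N$ must be tracked carefully to reach arbitrary $f\in o(1)$.

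The paper sidesteps both issues by abandoning Label Cover and FGLSS entirely. It works directly with the Gap-3SAT formula $\phi$ and builds $G_{\phi,\ell}$ whose vertices are partial assignments to $\ell$ of the boolean variables; the soundness (Theorem~\ref{thm:main-red}) is proved by counting labelled copies of $K_{t,t}$ via Alon's lemma combined with a birthday-paradox argument (Lemma~\ref{lem:ctt-bound} and Proposition~\ref{prop:birthday}), and then $\ell$ is chosen as $m\cdot f(m)^{1/5}$. That the variables are boolean is essential: Appendix~\ref{app:counter-bkrw} shows the analogous construction over a non-boolean alphabet (in particular, Label Cover) admits a biclique of size $\binom{n}{\ell}$ and therefore cannot yield a gap larger than~$2$---another way to see why your Label Cover route cannot deliver the claimed soundness.
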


We remark that, for {D$k$S} with perfect completeness, the aforementioned Feige-Seltser algorithm achieves\footnote{This guarantee was not stated explicitly in~\cite{FS97} but it can be easily achieved by changing the degree threshold in their algorithm {\bf DenseSubgraph} from $(1 - \varepsilon)n$ to $n^\varepsilon$.} an $n^{\varepsilon}$-approximation in time $n^{O(1/\varepsilon)}$ for every $\varepsilon > 0$~\cite{FS97}. Hence, the ratios in our theorems cannot be improved to some fixed polynomial and the ratio in Theorem~\ref{thm:main-gap-eth} is tight in this sense.

{\bf Comparison to Previous Results.} In terms of inapproximability ratios, the ratios ruled out in this work are almost polynomial and provides a vast improvement over previous results. Prior to our result, the best known ratio ruled out under any worst case assumption is only any constant~\cite{RS10} and the best ratio ruled out under any average case assumption is only $2^{O(\log^{2/3} n)}$~\cite{AAMMW11}. In addition, our results also have perfect completeness, which was only achieved in~\cite{BKRW17} under ETH and in~\cite{AAMMW11} under the Planted Clique Hypothesis but not in~\cite{Kho06,Feige02,RS10}.

Regarding the assumptions our results are based upon, the average case assumptions used in~\cite{Feige02,AAMMW11} are incomparable to ours. The assumption NP $\not\subseteq$ BPTIME($2^{n^\varepsilon}$) used in~\cite{Kho06} is also incomparable to ours since, while not stated explicitly, ETH and Gap-ETH by default focus only on deterministic algorithms and our reductions are also deterministic. The strengthened Unique Games Conjecture used in~\cite{RS10} is again incomparable to ours as it is a statement that a specific problem is NP-hard. 
Finally, although Braverman \etal's result~\cite{BKRW17} also relies on ETH, its relation to our results is more subtle. Specifically, their reduction time is only $2^{\tilde\Theta(\sqrt{m})}$ where $m$ is the number of clauses, meaning that they only need to assume that 3SAT $\notin$ DTIME($2^{\tilde\Theta(\sqrt{m})}$) to rule out a constant ratio polynomial-time approximation for {\sc D$k$S}. However, as we will see in Theorem~\ref{thm:main-red}, even to achieve a constant gap, our reduction time is $2^{\tilde\Omega(m^{3/4})}$. Hence, if 3SAT somehow ends up in DTIME($2^{\tilde\Theta(m^{3/4})}$) but outside of DTIME($2^{\tilde\Theta(\sqrt{m})}$), their result will still hold whereas ours will not even imply constant ratio inapproximability for {\sc D$k$S}.

{\bf Implications of Our Results.} One of the reasons that {\sc D$k$S} has received significant attention in the approximation algorithm community is due to its connections to many other problems. Most relevant to our work are the problems to which there are reductions from {\sc D$k$S} that preserve approximation ratios to within some polynomial\footnote{These are problems whose $O(\rho)$-approximation gives an $O(\rho^c)$-approximation for {\sc D$k$S} for some constant $c$.}. These problems include {\sc Densest At-Most-$k$-Subgraph}~\cite{AC09}, {\sc Smallest $m$-Edge Subgraph}~\cite{CDK12}, {\sc Steiner $k$-Forest}~\cite{HJ06} and {\sc Quadratic Knapsack}~\cite{Pis07}. For brevity, we do not define these problems here. We refer interested readers to cited sources for their definitions and reductions from {\sc D$k$S} to respective problems. We also note that this list is by no means exhaustive and there are indeed numerous other problems with similar known connections to {\sc D$k$S} (see e.g.~\cite{HJL06,KS07,KMNT08,CHK,HIM11,LNV14,CLLR15,CL15,CZ15,SFL15,TV15,CDKKR16,CMVZ16,Lee16}). 
Our results also imply hardness of approximation results with similar ratios to {\sc D$k$S} for such problems:

\begin{corollary} \label{cor:impl-eth}
For some constant $c > 0$, assuming ETH, there is no polynomial-time $n^{1/(\loglog n)^c}$-approximation algorithm for {\sc Densest At-Most-$k$-Subgraph}, {\sc Smallest $m$-Edge Subgraph}, {\sc Steiner $k$-Forest}, {\sc Quadratic Knapsack}. Moreover, for any function $f \in o(1)$, there is no polynomial-time $n^{f(n)}$-approximation algorithm for any of these problems, unless Gap-ETH is false.
\end{corollary}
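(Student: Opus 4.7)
The plan is to derive each inapproximability statement by composing the hardness of {\sc D$k$S} from Theorems~\ref{thm:main-eth} and~\ref{thm:main-gap-eth} with the known polynomial-time, approximation-preserving reductions from {\sc D$k$S} to each of the listed problems (via~\cite{AC09,CDK12,HJ06,Pis07}). Each such reduction maps a {\sc D$k$S} instance on $n$ vertices to an instance of the target problem of size $N = n^{O(1)}$ in such a way that an $\alpha$-approximation for the target problem yields an $\alpha^c$-approximation for {\sc D$k$S}, where $c \geq 1$ is a constant depending only on the reduction.

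For the ETH statement, suppose toward contradiction that, for one of the four problems, some polynomial-time algorithm achieves ratio $N^{1/(\loglog N)^{c_0}}$, where $c_0$ is a constant at our disposal. Feeding the output of the reduction into this algorithm yields an approximation ratio of at most $N^{c/(\loglog N)^{c_0}}$ for {\sc D$k$S}. Since $N = n^{O(1)}$, we have $\log N = \Theta(\log n)$ and therefore $\loglog N = \loglog n + O(1)$, so the resulting {\sc D$k$S} ratio is of the form $n^{O(1)/(\loglog n)^{c_0}}$. Taking $c_0$ sufficiently larger than the constant $c$ appearing in Theorem~\ref{thm:main-eth}, this falls below $n^{1/(\loglog n)^{c}}$ and contradicts that theorem.

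For the Gap-ETH statement, fix any $f \in o(1)$ and suppose toward contradiction an $N^{f(N)}$-approximation exists in polynomial time for one of the problems. The reduction then gives a {\sc D$k$S} approximation of ratio $N^{c \cdot f(N)} = n^{f'(n)}$ where $f'(n) := O(c) \cdot f(n^{O(1)})$. Since $N$ grows polynomially in $n$ and $f \in o(1)$, we have $f' \in o(1)$ as well, contradicting Theorem~\ref{thm:main-gap-eth}.

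The main point to verify is simply that each cited reduction is polynomial time with polynomial instance-size blowup and that its completeness/soundness guarantee translates approximation ratios as claimed (up to the constant power $c$); all four reductions already yield polynomial gaps from polynomial {\sc D$k$S} gaps, which is exactly what we need here. No further machinery is required, so beyond invoking the four reductions and performing the above routine substitution of the hardness ratios from our main theorems, there is no substantive obstacle.
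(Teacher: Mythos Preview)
Your proposal is correct and matches the paper's intent. The paper does not give an explicit proof of Corollary~\ref{cor:impl-eth}; it is stated as an immediate consequence of Theorems~\ref{thm:main-eth} and~\ref{thm:main-gap-eth} together with the cited approximation-preserving reductions (the footnote preceding the corollary spells out that an $O(\rho)$-approximation for each target problem yields an $O(\rho^c)$-approximation for {\sc D$k$S}), and your write-up is precisely the routine verification the paper leaves to the reader.
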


\section{Preliminaries and Notations} \label{sec:prelim}

We use $\exp(x)$ and $\log(x)$ to denote $e^x$ and $\log_2(x)$ respectively. $\polylog n$ is used as a shorthand for $O(\log^c n)$ for some constant $c$. For any set $S$, $\scrP(S) := \{T \mid T \subseteq S\}$ denotes the power set of $S$. For any non-negative integer $t \leqs |S|$, we use $\binom{S}{t} := \{T \in \scrP(S) \mid |T| = t\}$ to denote the collection of all subsets of $S$ of size $t$.

Throughout this work, we only concern with simple unweighted undirected graphs. Recall that the density of a graph $G = (V, E)$ on $N \geqs 2$ vertices is $|E|/\binom{N}{2}$. We say that a graph is \emph{$\alpha$-dense} if its density is $\alpha$. Moreover, for every $t \in \mathbb{N}$, we view each element of $V^t$ as a $t$-size ordered multiset of $V$. $(L, R) \in V^t \times V^t$ is said to be a \emph{labelled copy of a $t$-biclique} (or $K_{t, t}$) in $G$ if, for every $u \in L$ and $v \in R$, $u \ne v$ and $(u, v) \in E$. The number of labelled copies of $K_{t, t}$ in $G$ is the number of all such $(L, R)$'s.

\subsection{Exponential Time Hypotheses}

One of our results is based on the exponential time hypothesis (ETH), a conjecture proposed by Impagliazzo and Paturi~\cite{IP01} which asserts that 3SAT cannot be solved in subexponential time:

\begin{hypothesis}[ETH~\cite{IP01}] \label{hyp:eth}
No $2^{o(m)}$-time algorithm can decide whether any 3SAT formula with $m$ clauses\footnote{In its original form, the running time lower bound is exponential in the number of variables not the number of clauses; however, thanks to the sparsification lemma of Impagliazzo \etal~\cite{IPZ01}, both versions are equivalent.} is satisfiable.
\end{hypothesis}


Another hypothesis used in this work is Gap-ETH, a strengthened version of the ETH, which essentially states that even approximating 3SAT to some constant ratio takes exponential time:

\begin{hypothesis}[Gap-ETH~\cite{Din16,MR16}] \label{hyp:gap-eth}
There exists a constant $\varepsilon > 0$ such that no $2^{o(m)}$-time algorithm can, given a 3SAT formula $\phi$ with $m$ clauses\footnote{As noted by Dinur~\cite{Din16}, a subsampling argument can be used to make the number of clauses linear in the number of variables, meaning that the conjecture remains the same even when $m$ denotes the number of variables.}, distinguish between the case where $\phi$ is satisfiable and the case where $\val(\phi) \leqs 1 - \varepsilon$. Here $\val(\phi)$ denote the maximum fraction of clauses of $\phi$ satisfied by any assignment.
\end{hypothesis}


\subsection{Nearly-Linear Size PCPs and Subexponential Time Reductions}

The celebrated PCP Theorem~\cite{AS98,ALMSS98}, which lies at the heart of virtually all known NP-hardness of approximation results, can be viewed as a polynomial-time reduction from 3SAT to a gap version of 3SAT, as stated below. While this perspective is a rather narrow viewpoint of the theorem that leaves out the fascinating relations between parameters of PCPs, it will be the most convenient for our purpose.
\begin{theorem}[PCP Theorem~\cite{AS98,ALMSS98}]
For some constant $\varepsilon > 0$, there exists a polynomial-time reduction that takes a 3SAT formula $\varphi$ and produces a 3SAT formula $\phi$ such that
\begin{itemize}
\item \emph{(Completeness)} if $\varphi$ is satisfiable, then $\phi$ is satisfiable, and,
\item \emph{(Soundness)} if $\varphi$ is unsatisfiable, then $\val(\phi) \leqs 1 - \varepsilon$.
\end{itemize}
\end{theorem}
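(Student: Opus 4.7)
The plan is to follow Dinur's combinatorial gap-amplification proof, which yields the desired PCP reduction more directly than the earlier algebraic proofs of Arora--Safra and Arora--Lund--Motwani--Sudan--Szegedy.

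The first step is to cast 3SAT as a 2-CSP on a constraint graph $G$: variables become vertices carrying values in a constant-size alphabet $\Sigma$, and each clause is replaced by a constant-size gadget of binary constraints. Unsatisfiability of a 3SAT instance with $m$ clauses translates to at least one violated constraint in $G$, so any unsatisfiable instance has an initial soundness gap of $\Omega(1/m)$.

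The second step is Dinur's amplification lemma, which takes a 2-CSP over a constant-size alphabet with gap $\eta$ and produces another 2-CSP, still of constant-size alphabet and of size only a constant factor larger, with gap $\min(2\eta, \eta_0)$ for some absolute constant $\eta_0 > 0$. I would describe this as a composition of three sub-steps: a preprocessing that turns $G$ into a constant-degree expander while losing only a constant factor in gap; a powering step that replaces $G$ by its $t$-th power, where new vertices carry the assignments of their $t$-neighborhoods and new edges check consistency along length-$t$ random walks, amplifying the gap by $\Omega(\sqrt{t})$ at the cost of inflating the alphabet to $\Sigma^{d^t}$; and an alphabet-reduction step via composition with a constant-size PCP of Proximity (for instance, the Hadamard/long-code based inner verifier), which brings the alphabet back to a constant while losing only a constant factor in gap. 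Choosing $t$ to be a sufficiently large constant ensures a net amplification of at least $2$.

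The third step is to iterate the amplification $O(\log m)$ times, which drives the gap from $\Omega(1/m)$ up to the constant $\eta_0$ while keeping the instance size polynomial in $m$. Finally, the resulting constant-alphabet 2-CSP is converted back into a 3SAT formula by rewriting each binary constraint as an equivalent constant-size conjunction of $3$-clauses, which preserves the gap up to a constant factor and yields the advertised $\varepsilon$.

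The main obstacle is the powering step: showing that the gap truly multiplies by $\Omega(\sqrt{t})$ requires a delicate random-walk argument on the expander $G$. One extracts a candidate assignment for the base graph from an assignment to the powered graph by a plurality-vote decoding, then uses the spectral gap of $G$ together with a concentration-of-measure bound to argue that a length-$t$ walk must traverse a violated base edge with probability $\Omega(\sqrt{t}\,\eta)$. Ensuring this gain survives the constant-factor losses in the expanderization and alphabet-reduction steps, so that the iteration genuinely doubles the gap per round, is the technical heart of the argument.
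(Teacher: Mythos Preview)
The paper does not prove this statement at all: it is quoted in the preliminaries as a classical result from~\cite{AS98,ALMSS98}, with no proof sketch given. The paper then separately states Dinur's sharper version (Theorem~\ref{thm:dinur-pcp}), again without proof, and uses that off the shelf in the reduction to \textsc{D$k$S}. So there is nothing in the paper to compare your proposal against; the PCP Theorem is treated as a black box.

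That said, your outline is a faithful high-level summary of Dinur's gap-amplification proof, and would be an appropriate way to supply a proof if one were required. One small remark: the quantitative gain you quote from the powering step, $\Omega(\sqrt{t})$, matches Dinur's original analysis, though later work sharpened this to $\Omega(t)$; either suffices since $t$ is a constant chosen large enough to beat the constant-factor losses from expanderization and alphabet reduction. For the purposes of this paper, however, none of this needs to be reproduced: you can simply cite the theorem and move on, as the author does.
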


Following the first proofs of the PCP Theorem, considerable efforts have been made to improve the trade-offs between the parameters in the theorem. One such direction is to try to reduce the size of the PCP, which, in the above formulation, translates to reducing the size of $\phi$ relative to $\varphi$. On this front, it is known that the size of $\phi$ can be made nearly-linear in the size of $\varphi$~\cite{Din07,MR08,BS08}. For our purpose, we will use Dinur's PCP Theorem~\cite{Din07}, which has a blow-up of only polylogarithmic in the size of $\phi$:

\begin{theorem}[Dinur's PCP Theorem~\cite{Din07}] \label{thm:dinur-pcp}
For some constant $\varepsilon, d > 0$, there exists a polynomial-time reduction that takes a 3SAT formula $\varphi$ with $m$ clauses and produces another 3SAT formula $\phi$ with $m' = O(m \polylog m)$ clauses such that
\begin{itemize}
\item \emph{(Completeness)} if $\varphi$ is satisfiable, then $\phi$ is satisfiable, and,
\item \emph{(Soundness)} if $\varphi$ is unsatisfiable, then $\val(\phi) \leqs 1 - \varepsilon$, and,
\item \emph{(Bounded Degree)} each variable of $\phi$ appears in $\leqs d$ clauses.
\end{itemize}
\end{theorem}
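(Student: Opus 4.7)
The plan is to follow Dinur's combinatorial gap-amplification proof of the PCP theorem, tracking sizes carefully to obtain the quasi-linear bound. First, I would reduce 3SAT to a constraint-graph problem: each clause becomes an edge constraint, each variable occurrence a vertex, the alphabet is Boolean, and an unsatisfiable $\varphi$ yields a graph with initial unsatisfiability gap $\Omega(1/m)$. This step is linear-size.

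The core of the proof is an iterative amplification loop that takes a constraint graph $G$ with constant alphabet $\Sigma$ and gap $\varepsilon$ to one with the same alphabet and gap $\min(2\varepsilon, \varepsilon^\star)$ for some absolute constant $\varepsilon^\star$. One round has three sub-steps. \emph{Preprocessing:} replace each vertex by an expander-connected cloud so that $G$ becomes $d$-regular for a constant $d$, then superimpose a spectral expander; the gap changes only by a constant factor. \emph{Powering:} replace $G$ by its $t$-th power for a sufficiently large constant $t$, so vertices become length-$t$ walks and each new constraint checks the original constraints along a random sub-walk; Dinur's main technical lemma, proved via a walk-decomposition argument that exploits expansion, then shows the gap amplifies to $\Omega(\sqrt{t}\,\varepsilon)$. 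The alphabet inflates to the still-constant $\Sigma^{d^{O(t)}}$. \emph{Alphabet reduction:} compose with an inner ``assignment tester'' PCP that rewrites each large-alphabet constraint in terms of Boolean variables, returning the alphabet to a fixed constant.

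Iterating this loop $O(\log m)$ times drives the gap from $\Omega(1/m)$ up to an absolute constant $\varepsilon$. A final linear-size step converts the resulting Boolean constraint graph into a 3SAT formula by expressing each constant-size constraint as $O(1)$ clauses; this preserves the gap and yields the required bounded-degree property of $\phi$ because each variable still appears in only $O(1)$ constraints.

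The main obstacle is keeping the total size at $O(m \polylog m)$. A naive analysis with per-iteration blow-up bounded by some absolute constant $C > 1$ yields size $m \cdot C^{\Theta(\log m)} = m^{1 + \Omega(1)}$, which is only polynomial. The way around this, executed in \cite{Din07} (with the help of quasi-linear-size inner PCPs in the spirit of \cite{MR08,BS08}), is to use an assignment tester whose size is quasi-linear in its input length and to apply the reduction recursively to build the inner PCPs themselves; this shrinks the per-round multiplicative blow-up enough that the product over $O(\log m)$ rounds telescopes to a $\polylog m$ factor. Carefully maintaining the expansion, degree, alphabet-size, and soundness invariants across this recursive composition is the most delicate part of the bookkeeping and is the step I expect to be the main technical hurdle.
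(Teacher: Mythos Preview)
The paper does not prove this theorem at all: it is quoted as a black-box result from \cite{Din07} (with \cite{MR08,BS08} mentioned alongside as other sources of nearly-linear PCPs) and is then used as an ingredient in the proof of Theorem~\ref{thm:main-eth}. So there is nothing in the paper to compare your argument against; you are re-deriving a cited result.

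That said, your outline of Dinur's gap amplification (expanderize, power, alphabet-reduce, iterate $O(\log m)$ times) is an accurate high-level summary of the combinatorial proof of the PCP theorem, and you correctly flag that the naive size accounting gives only a polynomial bound. Where your sketch goes off track is the proposed fix. In Dinur's loop the assignment tester is applied to a \emph{constant-size} constraint (the alphabet after powering is $|\Sigma|^{d^{O(t)}}$ with $t,d,|\Sigma|$ all constants), so whether the tester is quasi-linear or polynomial in its input length is irrelevant --- it contributes $O(1)$ regardless. Making the inner PCPs ``quasi-linear'' therefore cannot reduce the per-round blow-up below a constant $C>1$, and $C^{\Theta(\log m)}$ is still $m^{\Theta(1)}$. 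The route actually taken in \cite{Din07} to obtain $m\,\polylog m$ size is different: one first invokes the quasi-linear-length PCPs of \cite{BS08} to land on a constraint system that already has size $m\,\polylog m$ and gap $1/\polylog m$ (rather than $1/m$), and only then runs Dinur's amplification for $O(\log\log m)$ rounds; the total blow-up is then $C^{O(\log\log m)}=\polylog m$. If you want a self-contained argument, this is the structure you should aim for.
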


Note that Dinur's PCP, combined with ETH, implies a lower bound of $2^{\Omega(m/\polylog m)}$ on the running time of algorithms that solve the gap version of 3SAT, which is only a factor of $O(\polylog m)$ in the exponent off from Gap-ETH. Putting it differently, Gap-ETH is closely related to the question of whether a linear size PCP, one where the size blow-up is only constant instead of polylogarithmic, exists; its existence would mean that Gap-ETH is implied by ETH.

Under the exponential time hypothesis, nearly-linear size PCPs allow us to start with an instance $\phi$ of the gap version of 3SAT and reduce, in subexponential time, to another problem. As long as the time spent in the reduction is $2^{o(m/\polylog m)}$, we arrive at a lower bound for the problem. Arguably, Aaronson \etal~\cite{AIM14} popularized this method, under the name \emph{birthday repetition}, by using such a reduction of size $2^{\tilde\Omega(\sqrt{m})}$ to prove ETH-hardness for free games and dense CSPs. 
Without going into any detail now, let us mention that the name birthday repetition comes from the use of the birthday paradox in their proof and, since its publication, their work has inspired many inapproximability results~\cite{BKW15,R15,BPR16,MR16,R16,DFS16,BKRW17}. Our result too is inspired by this line of work and, as we will see soon, part of our proof also contains a birthday-type paradox. 

\section{The Reduction and Proofs of The Main Theorems} \label{sec:proof}

The reduction from the gap version of 3SAT to {\sc D$k$S} is simple. Given a 3SAT formula $\phi$ on $n$ variables $x_1, \dots, x_n$ and an integer $1 \leqs \ell \leqs n$, we construct a graph\footnote{For interested readers, we note that our graph is not the same as the FGLSS graph~\cite{FGLSS} of the PCP in which the verifier reads $\ell$ random variables and accepts if no clause is violated; while this graph has the same vertex set as ours, the edges are different since we check that no clause between the two vertices is violated, which is not checked in the FGLSS graph. It is possible to modify our proof to make it work for this FGLSS graph. However, the soundness guarantee for the FGLSS graph is worse.} 
$G_{\phi, \ell} = (V_{\phi, \ell}, E_{\phi, \ell})$ as follows:
\begin{itemize}
\item Its vertex set $V_{\phi, \ell}$ contains all partial assignments to $\ell$ variables, i.e., each vertex is $\{(x_{i_1}, b_{i_1}), \dots, (x_{i_\ell}, b_{i_\ell})\}$ where $x_{i_1}, \dots, x_{i_\ell}$ are $\ell$ distinct variables and $b_{i_1}, \dots, b_{i_\ell} \in \{0, 1\}$ are the bits assigned to them.
\item We connect two vertices $\{(x_{i_1}, b_{i_1}), \dots, (x_{i_\ell}, b_{i_\ell})\}$ and \\ $\{(x_{i'_1}, b_{i'_1}), \dots, (x_{i'_\ell}, b_{i'_\ell})\}$ by an edge iff the two partial assignments are consistent (i.e. no variable is assigned 0 in one vertex and 1 in another), and, every clause in $\phi$ all of whose variables are from $x_{i_1}, \dots, x_{i_\ell}, x_{i'_1}, \dots, x_{i'_\ell}$ is satisfied by the partial assignment induced by the two vertices.
\end{itemize}


Clearly, if $\val(\phi) = 1$, the $\binom{n}{\ell}$ vertices corresponding to a satisfying assignment  induce a clique. Our main technical contribution is proving that, when $\val(\phi) \leqs 1 - \varepsilon$, every $\binom{n}{\ell}$-subgraph is sparse:

\begin{theorem} \label{thm:main-red}
For any $d, \varepsilon > 0$, there exists $\delta > 0$ such that, for any 3SAT formula $\phi$ on $n$ variables such that $\val(\phi) \leqs 1 - \varepsilon$ and each variable appears in at most $d$ clauses and for any integer $\ell \in [n^{3/4} / \delta, n / 2]$, any $\binom{n}{\ell}$-subgraph of $G_{\phi, \ell}$ has density $\leqs 2^{-\delta \ell^4 / n^3}$.
\end{theorem}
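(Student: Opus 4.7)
The plan is to prove the contrapositive. Suppose, toward a contradiction, that $S \subseteq V_{\phi,\ell}$ has $|S| = \binom{n}{\ell}$ and edge density exceeding $2^{-\delta \ell^4/n^3}$. I will produce a (randomized) assignment $A \in \{0,1\}^n$ that satisfies more than $(1 - \varepsilon)m$ clauses of $\phi$ in expectation, and hence by the probabilistic method some specific assignment contradicts the hypothesis $\val(\phi) \leqs 1 - \varepsilon$.

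The assignment $A$ will be extracted from a richly connected substructure of $S$. Writing $\mathrm{vars}(u) \subseteq \{x_1,\ldots,x_n\}$ and $\sigma_u$ for the variable-set and assignment associated with a vertex $u$ of $G_{\phi,\ell}$, my first step is to use the density hypothesis together with a K\H{o}v\'{a}ri--S\'{o}s--Tur\'{a}n / convexity counting argument to extract a labelled copy of $K_{t,t}$ inside $S$, where $t$ is a carefully chosen parameter. Concretely, $t$ must balance two requirements: (a)~the density $\rho > 2^{-\delta \ell^4/n^3}$ must be large enough to force a $K_{t,t}$ in $S$, i.e.\ $\log(1/\rho)$ should be at most roughly $\log|S|/t$; and (b)~a random pair from the two sides of the $K_{t,t}$ must jointly cover essentially every $3$-clause of $\phi$. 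Given such a $K_{t,t}$ with sides $L$ and $R$, I pick a uniformly random pair $(u^*,v^*) \in L \times R$, set $A$ to agree with $\sigma_{u^*} \cup \sigma_{v^*}$ on $\mathrm{vars}(u^*) \cup \mathrm{vars}(v^*)$, and set the remaining variables independently and uniformly at random. Every clause whose variables lie inside $\mathrm{vars}(u^*) \cup \mathrm{vars}(v^*)$ is automatically satisfied because $u^* \sim v^*$ is an edge, while every remaining clause has at least one uniformly random literal and so is satisfied with probability at least $1/2$. Averaging over the random pair and the random extension, and using the bounded-degree bound $m \leqs dn/3$, I expect to show $\mathbb{E}[\#\text{ clauses satisfied by }A] > (1-\varepsilon)m$, provided $\delta$ is chosen small enough as a function of $d$ and $\varepsilon$.

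The main technical obstacle is the parameter calibration. The $K_{t,t}$ extraction forces $t$ to be at most roughly $\log|S|/\log(1/\rho) = \Theta(\ell \log(n/\ell) \cdot n^3 / (\delta \ell^4))$, while the single-pair coverage probability is only about $(2\ell/n)^3$, so near-complete coverage by a single pair would require $\ell$ very close to $n/2$. For the full range $\ell \in [n^{3/4}/\delta,\, n/2]$ one must therefore exploit the $t^2$ distinct pairs inside the $K_{t,t}$, each contributing its own candidate assignment; the catch is that these pairs' coverage events are positively correlated, so a straightforward Chernoff-type bound is not available and a more delicate second-moment or supersaturation argument is likely needed. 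The tightest regime is $\ell = \Theta(n^{3/4})$, where coverage is weakest and the constraints (a) and (b) nearly saturate; careful accounting of logarithmic factors and lower-order corrections is what ultimately ensures that a single absolute constant $\delta > 0$ (depending only on $d$ and $\varepsilon$) works uniformly across the full range of $\ell$.
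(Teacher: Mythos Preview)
Your proposal has a genuine gap, and it is not one that a second-moment tweak will close.

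The decoding step is where the plan breaks. A single edge $(u^*,v^*)$ of the $K_{t,t}$ assigns only $2\ell$ variables, so the fraction of clauses entirely contained in $\mathrm{vars}(u^*)\cup\mathrm{vars}(v^*)$ is $\Theta((\ell/n)^3)$. For $\ell$ near $n^{3/4}$ this is $n^{-3/4}$, and your random extension only guarantees probability $\geqs 1/2$ (at best $7/8$) on the uncovered clauses. That yields $\mathbb{E}[\#\text{satisfied}]\geqs (7/8 + o(1))m$, which does not contradict $\val(\phi)\leqs 1-\varepsilon$ for small $\varepsilon$. Averaging over the $t^2$ pairs does not help either: the expected coverage of a uniformly random pair is still $\Theta((\ell/n)^3)$, independent of $t$. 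And you cannot glue the $t^2$ pairs into one assignment, because two vertices $u,u'$ on the \emph{same} side of the biclique need not be adjacent, hence can assign opposite bits to a shared variable; the biclique carries no consistency information within $L$ or within $R$. So there is no single global assignment hiding in one $K_{t,t}$, and the ``supersaturation/second-moment'' escape hatch you gesture at does not address this obstruction.

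The paper's argument is structurally different: it never decodes an assignment from a single biclique. Instead it \emph{counts} all labelled $K_{t,t}$'s in $G_{\phi,\ell}$ and compares against Alon's supersaturation lower bound $(\alpha/2)^{t^2}N^{2t}$. The count is organized by the pair $(A,B)=(\bigcup_{u\in L}u,\;\bigcup_{v\in R}v)\subseteq A_\phi$. The boolean alphabet is used exactly once, to force $|A|+|B|\leqs 2n$ (if $(x,b)\in A$ then $(x,\neg b)\notin B$). Then a birthday argument on $\ell$-subsets of $A$ (and of $B$) shows $|\mathcal{K}_{t,t}(A,B)|\leqs \bigl(2^{-\lambda\ell^2/n}\binom{n}{\ell}\bigr)^{2t}$: either many variables have both bits in $A$ (so a random $\ell$-subset of $A$ hits a forbidden $\{(x,0),(x,1)\}$ pair with probability $1-2^{-\Omega(\ell^2/n)}$), or $A\cap B$ determines a near-complete assignment, in which case the $\Omega(\varepsilon m)$ unsatisfied clauses supply forbidden variable-pairs and the same birthday bound applies. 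Summing over the $\leqs 2^{4n}$ choices of $(A,B)$ and taking $t=\Theta(n^2/\ell^2)$ so that the $2^{4n}$ prefactor is absorbed, one gets $|\mathcal{K}_{t,t}|\leqs 2^{-\Omega(\ell^2/n)\cdot t}\binom{n}{\ell}^{2t}$, which via Alon's lemma translates to density $\leqs 2^{-\Omega(\ell^4/n^3)}$. The role of the assumption $\val(\phi)\leqs 1-\varepsilon$ is thus to manufacture forbidden pairs for the birthday bound, not to be contradicted by a decoded assignment.
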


We remark that there is nothing special about 3SAT; we can start with any boolean CSP and end up with a similar result, albeit the soundness deteriorates as the arity of the CSP grows. However, it is crucial that the variables are boolean; in fact, Braverman \etal~\cite{BKRW17} considered a graph similar to ours for 2CSPs but they were unable to achieve subconstant soundness since their variables were not boolean\footnote{Any satisfiable boolean 2CSP is solvable in polynomial time so one cannot start with a boolean 2CSP either.}. Specifically, there is a non-boolean 2CSP with low value which results in the graph having a biclique of size $\geqs \binom{n}{\ell}$ (Appendix~\ref{app:counter-bkrw}), i.e., one cannot get an inapproximability ratio more than two starting from a non-boolean CSP. 

Once we have Theorem~\ref{thm:main-red}, the inapproximability results of {\sc D$k$S} (Theorem~\ref{thm:main-eth} and~\ref{thm:main-gap-eth}) can be easily proved by applying the theorem with appropriate choices of $\ell$. We defer these proofs to Subsection~\ref{subsec:dks-hardness}. For now, let us turn our attention to the proof of Theorem~\ref{thm:main-red}. To prove the theorem, we resort to the following lemma due to Alon~\cite{Alon02}, which states that every dense graph contains many labelled copies of bicliques:

\begin{lemma}[{\cite[Lemma 2.1]{Alon02}}\footnotemark] \label{lem:gen-kst}
Any $\alpha$-dense graph $G$ on $N \geqs 2$ vertices has at least $(\alpha/2)^{t^2}N^{2t}$ labelled copies of $K_{t, t}$ for all $t \in \mathbb{N}$.
\end{lemma}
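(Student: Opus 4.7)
The plan is to prove the lower bound via a standard double-convexity counting argument, essentially the Kővári--Sós--Turán technique adapted to the ordered-multiset formulation used in the paper. For each ordered tuple $R = (v_1, \dots, v_t) \in V^t$, let $d(R) := |N(v_1) \cap \cdots \cap N(v_t)|$ denote the size of the common open neighborhood of the entries of $R$. The key initial observation is that the number of labelled copies of $K_{t,t}$ equals exactly $\sum_{R \in V^t} d(R)^t$: once $R$ is fixed, the valid choices of $L$ are precisely the ordered $t$-tuples drawn (with repetition allowed) from the common neighborhood of $R$'s entries, and the simple-graph assumption ($v \notin N(v)$) automatically guarantees $u \ne v$ for every $u \in L, v \in R$.

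Next, I would apply Jensen's inequality to the convex map $x \mapsto x^t$ twice. The first application, against the uniform distribution on $V^t$, gives
\[
\sum_{R \in V^t} d(R)^t \;\geq\; N^t \left(\frac{1}{N^t}\sum_{R \in V^t} d(R)\right)^{t}.
\]
Switching the order of summation rewrites the inner sum cleanly: a vertex $u$ contributes $1$ to $d(R)$ iff every entry of $R$ lies in $N(u)$, so $\sum_R d(R) = \sum_{u \in V} \deg(u)^t$. A second Jensen application, now against the uniform distribution on $V$, bounds $\sum_u \deg(u)^t \geq N \cdot \overline{\deg}^{\,t}$, where $\overline{\deg} = 2|E|/N = \alpha(N-1)$ is the average degree of the $\alpha$-dense graph $G$.

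Chaining these two inequalities and collecting exponents yields the lower bound $\alpha^{t^2} N^{2t} \cdot \bigl((N-1)/N\bigr)^{t^2}$. The routine cleanup step is to absorb the $((N-1)/N)^{t^2}$ factor: since $N \geq 2$ gives $(N-1)/N \geq 1/2$, the whole product is at least $(\alpha/2)^{t^2} N^{2t}$, as desired. I do not expect a genuine obstacle here; the only real subtlety is choosing the right quantity to apply convexity to --- namely the common-neighborhood sizes $d(R)$ rather than the biclique count directly --- so that both Jensen steps go in the favorable direction and the average-degree identity can be plugged in at the end.
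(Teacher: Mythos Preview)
Your argument is correct. The double-Jensen computation goes through exactly as you describe: the identity $\sum_{R\in V^t} d(R)=\sum_{u\in V}\deg(u)^t$ is right, the two convexity steps chain to give $\alpha^{t^2}N^{2t}\bigl((N-1)/N\bigr)^{t^2}$, and the $N\geqs 2$ cleanup is fine. The point that the simple-graph assumption forces $u\ne v$ automatically (since $v\notin N(v)$) is exactly what makes the count $\sum_R d(R)^t$ match the paper's ordered-multiset definition of labelled $K_{t,t}$'s.

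As for comparison: the paper does not actually prove this lemma. It quotes Alon~\cite{Alon02} and, in the footnote, only performs the parameter translation from Alon's formulation (``$\geqs \varepsilon N^2$ edges $\Rightarrow$ $\geqs (2\varepsilon)^{t^2}N^{2t}$ copies'') to the density formulation here, via the observation that an $\alpha$-dense graph on $N\geqs 2$ vertices has at least $(\alpha/4)N^2$ edges. Your write-up is a self-contained proof of the underlying statement---essentially the standard K\H{o}v\'ari--S\'os--Tur\'an counting that Alon's lemma is based on---so it goes beyond what the paper does, but is entirely in the same spirit.
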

\footnotetext{The lemma is stated slightly differently in~\cite{Alon02}. Namely, it was stated there that any graph $G$ with $\geqs \varepsilon N^2$ edges contains at least $(2\varepsilon)^{t^2}N^{2t}$ labelled copies of $K_{t, t}$. The formulation here follows from the fact that $\alpha$-dense graph on $N \geqs 2$ vertices contains at least $(\alpha/4)N^2$ edges.}

Equipped with Lemma~\ref{lem:gen-kst}, our proof strategy is to bound the number of labelled copies of $K_{t, t}$ in $G_{\phi, \ell}$ where $t$ is to be chosen later. To argue this, we will need some additional notations:
\begin{itemize}
\item First, let $A_{\phi} := \{(x_1, 0), (x_1, 1), \dots, (x_n, 0), (x_n, 1)\}$ be the set of all single-variable partial assignments. Observe that $V_{\phi, \ell} \subseteq \binom{A_{\phi}}{\ell}$, i.e., each $u \in V_{\phi, \ell}$ is a subset of $A_{\phi}$ of size $\ell$.
\item Let $\cA: (V_{\phi, \ell})^t \rightarrow \scrP(A_{\phi})$ be a ``flattening'' function that, on input $T \in (V_{\phi, \ell})^t$, outputs the set of all single-variable partial assignments that appear in at least one vertex in $T$. In other words, when each vertex $u$ is viewed as a subset of $A_{\phi}$, we can write $\cA(T)$ simply as $\bigcup_{u \in T} u$.
\item Let $\cK_{t, t} := \{(L, R) \in (V_{\phi, \ell})^t \times (V_{\phi, \ell})^t \mid \forall u \in L, \forall v \in R, u \ne v \wedge (u, v) \in E_{\phi, \ell}\}$ denote the set of all labelled copies of $K_{t, t}$ in $G_{\phi, \ell}$ and, for each $A, B \subseteq A_{\phi}$, let $\cK_{t, t}(A, B) := \{(L, R) \in \cK_{t, t} \mid \cA(L) = A, \cA(R) = B\}$ denote the set of all $(L, R) \in \cK_{t, t}$ with $\cA(L) = A$ and $\cA(R) = B$.
\end{itemize}
The number of labelled copies of $K_{t, t}$ in $G_{\phi, \ell}$ can be written as
\begin{align} \label{inq:ctt-to-cttalar}
|\cK_{t, t}| = \sum_{A, B \subseteq A_{\phi}} |\cK_{t, t}(A, B)|.
\end{align}
To bound $|\cK_{t, t}|$, we will prove the following bound on $|\cK_{t, t}(A, B)|$.

\begin{lemma} \label{lem:ctt-bound}
Let $\phi, n, \ell, d$ and $\varepsilon$ be as in Theorem~\ref{thm:main-red}. There exists $\lambda > 0$ depending only on $d$ and $\varepsilon$ such that, for any $t \in \mathbb{N}$ and any $A, B \subseteq A_{\phi}$, $|\cK_{t, t}(A, B)| \leqs \left(2^{- \lambda \ell^2 / n} \binom{n}{\ell}\right)^{2t}$.
\end{lemma}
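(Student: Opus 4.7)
The plan is to bound $|\cK_{t,t}(A,B)|$ via the straightforward inequality $|\cK_{t,t}(A,B)| \leqs |C_A|^t |C_B|^t$, where $C_A := \{u \in V_{\phi,\ell} : u \subseteq A\}$ is the set of consistent $\ell$-subsets of $A$ and $C_B$ is defined analogously; the substance of the argument is then to show $|C_A|, |C_B| \leqs 2^{-\lambda \ell^2/n}\binom{n}{\ell}$ whenever $\cK_{t,t}(A,B)$ is nonempty. A direct enumeration gives $|C_A|$ as the coefficient of $z^\ell$ in $(1+2z)^{|V_A^i|}(1+z)^{|V_A^c|}$, where $V_A$ (the set of variables occurring in $A$) is partitioned into $V_A^c$ (variables appearing with a single value in $A$) and $V_A^i$ (variables appearing with both), so using $(1+2z) \leqs 2(1+z)$ we obtain $|C_A| \leqs 2^{|V_A^i|}\binom{|V_A|}{\ell}$, and analogously for $|C_B|$.

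The first substantive step is to extract two structural properties of $(A,B)$.
\emph{(i) Orthogonality.} $V_A^i \cap V_B = \emptyset$ and symmetrically $V_B^i \cap V_A = \emptyset$: if $x \in V_A^i$ then $L$ must contain distinct vertices $u,u'$ with $(x,0)\in u$ and $(x,1)\in u'$, so any $v \in R$ with $x \in V_v$ would be inconsistent with one of $u,u'$, violating the biclique condition.
\emph{(ii) Per-edge satisfiability.} For every $(u,v) \in L \times R$, $|V_u \cup V_v| \leqs (1-\eta)n$ for some $\eta = \Theta(\varepsilon/d) > 0$. The partial assignment $u\cup v$ satisfies every clause with variables in $V_u \cup V_v$, so a full extension fails only on clauses containing a variable outside $V_u \cup V_v$; since $\val(\phi) \leqs 1-\varepsilon$ forces $\geqs \varepsilon m$ unsatisfied clauses and each variable lies in $\leqs d$ clauses (and WLOG $m = \Theta(n)$ by discarding unused variables), at least $\Omega(\varepsilon n/d)$ variables must lie outside $V_u \cup V_v$.

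The main obstacle is combining these ingredients into the product bound $|C_A|\cdot|C_B| \leqs 2^{-2\lambda\ell^2/n}\binom{n}{\ell}^2$. I would average the per-pair inequality over $(u,v) \in L \times R$: with $c_x := |\{i : x \in V_{u_i}\}|$ and $d_x := |\{j : x \in V_{v_j}\}|$, this yields the quadratic constraint
\[
\sum_{x}(t - c_x)(t - d_x) \geqs \eta n t^2, \qquad \sum_x c_x = \sum_x d_x = t\ell,
\]
relating the occupancy profiles of $L$ and $R$. Together with the orthogonality bounds $|V_A^i| \leqs n - |V_B|$, $|V_B^i| \leqs n - |V_A|$ and the elementary estimate $\binom{|V_A|}{\ell}/\binom{n}{\ell} \leqs \exp(-\ell(n-|V_A|)/n)$, I would derive a trade-off forcing either both $n - |V_A|$ and $n - |V_B|$ to be $\Omega(\ell)$, or $|V_A^i| + |V_B^i|$ small enough to absorb the $2^{|V_A^i|+|V_B^i|}$ overhead, yielding the needed $2^{-2\lambda\ell^2/n}$ factor. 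The most delicate regime is $\ell \ll n$, where the pointwise inequality $|V_u \cup V_v| \leqs (1-\eta)n$ becomes weak; there the booleanness of the variables is crucial---each shared variable in $V_u \cap V_v$ costs a factor of $2$ in the consistency count, as reflected in the generating function above---consistent with the paper's remark that the analogous bound fails in the non-boolean setting (Appendix~\ref{app:counter-bkrw}).
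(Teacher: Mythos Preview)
Your proposal contains a genuine gap: the bound $|\cK_{t,t}(A,B)| \leqs |C_A|^t|C_B|^t$ throws away exactly the information that makes the lemma true. To see this concretely, take the case where $A=B$ consists of a single assignment $(x,f(x))$ for every variable $x$ (this is the ``all good'' case in the paper's terminology). Then $V_A^i=V_B^i=\emptyset$ and $|V_A|=|V_B|=n$, so your generating-function bound gives $|C_A|=|C_B|=\binom{n}{\ell}$ exactly, with no savings whatsoever. Your per-edge inequality $|V_u\cup V_v|\leqs(1-\eta)n$ is vacuous here: since $|V_u|=|V_v|=\ell$ we already have $|V_u\cup V_v|\leqs 2\ell\leqs n$, and the theorem only concerns $\ell\leqs n/2$. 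The averaged constraint $\sum_x(t-c_x)(t-d_x)\geqs\eta nt^2$ is likewise trivially implied by $\sum_x c_x=\sum_x d_x=t\ell$ and $2\ell\leqs(1-\eta)n$, so it carries no usable information. Your final paragraph acknowledges that this regime is ``delicate'' but does not offer an argument; the remark about booleanness and shared variables costing a factor of $2$ is irrelevant in this case since $V_A^i=\emptyset$.

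What you are missing is that membership of $u$ in $C_A$ is not the right notion of ``candidate for a vertex of $L$'': the biclique condition also forces that every clause whose variables are split between $u$ and some $v\in R$ is satisfied. The paper exploits this directly. In the all-good case it observes that the common assignment $f$ violates at least $\varepsilon m/2$ clauses, and for each such clause any $u\in L$ is forbidden from containing two of its variables (otherwise some $v\in R$ containing the third variable would fail the edge test). This produces $\Omega_\varepsilon(n)$ forbidden pairs inside $A$, and a birthday-type estimate (Proposition~\ref{prop:birthday}) then shows only a $2^{-\Omega_{\varepsilon,d}(\ell^2/n)}$ fraction of $\binom{A}{\ell}$ avoids all of them. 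The other two cases (many ``terrible'' variables, giving $|A|+|B|\leqs(1-\beta)2n$; many ``bad'' variables, giving forbidden pairs $\{(x,0),(x,1)\}$) are closer in spirit to what you wrote, but the crux of the lemma is the good case, and your cardinality-only relaxation $|V_u\cup V_v|\leqs(1-\eta)n$ cannot recover it.
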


Before we prove the above lemma, let us see how Lemma~\ref{lem:gen-kst} and Lemma~\ref{lem:ctt-bound} imply Theorem~\ref{thm:main-red}.

\begin{proof}[Proof of Theorem~\ref{thm:main-red}]

Assume w.l.o.g. that $\lambda \leqs 1$. Pick $\delta = \lambda^2 / 8$ and $t = (4 / \lambda)(n^2 / \ell^2)$. From Lemma~\ref{lem:ctt-bound} and (\ref{inq:ctt-to-cttalar}), we have
\begin{align*}
|\cK_{t, t}| \leqs
2^{4n} \cdot \left(2^{-\lambda \ell^2/n}\binom{n}{\ell}\right)^{2t}
\leqs (2^{-\lambda\ell^2/n})^{t} \cdot \binom{n}{\ell}^{2t}
\end{align*}
where the second inequality comes from our choice of $t$; note that $t$ is chosen so that the $2^{4n}$ factor is consumed by $2^{-\lambda\ell^2/n}$ from Lemma~\ref{lem:ctt-bound}. Finally, consider any $\binom{n}{\ell}$-subgraph of $G_{\phi, \ell}$. By the above bound, it contains at most $(2^{-\lambda\ell^2/n})^{t} \cdot \binom{n}{\ell}^{2t}$ labelled copies of $K_{t, t}$. Thus, from Lemma~\ref{lem:gen-kst} and from $\ell \geqs n^{3/4}/\delta$, its density is at most $2 \cdot 2^{-\lambda\ell^2/(nt)} = 2 \cdot 2^{-2\delta \ell^4 / n^3} \leqs 2^{-\delta \ell^4 / n^3}$ as desired.
\end{proof}

We now move on to the proof of Lemma~\ref{lem:ctt-bound}.

\begin{proof}[Proof of Lemma~\ref{lem:ctt-bound}]
First, notice that if $(x, b)$ appears in $A$ and $(x, \neg b)$ appears in $B$ for some variable $x$ and bit $b$, then $\cK_{t, t}(A, B) = \emptyset$; this is because, for any $L$ with $\cA(L) = A$ and $R$ with $\cA(R) = B$, there exist $u \in L$ and $v \in R$ that contain $(x, b)$ and $(x, \neg b)$ respectively, meaning that there is no edge between $u$ and $v$ and, thus, $(L, R) \notin \cK_{t, t}(A, B)$. Hence, from now on, we can assume that, if $(x, b)$ appears in one of $A, B$, then the other does not contain $(x, \neg b)$. Observe that this implies that, for each variable $x$, its assignments can appear in $A$ and $B$ at most two times\footnote{This is where we use the fact that the variables are boolean. For non-boolean CSPs, each variable $x$ can appear more than two times in one of $A$ or $B$ alone, which can indeed be problematic (see Appendix~\ref{app:counter-bkrw}).} in total. This in turn implies that $|A| + |B| \leqs 2n$.

Let us now argue that $|\cK_{t, t}(A, B)|\leqs \binom{n}{\ell}^{2t}$; while this is not the bound we are looking for yet, it will serve as a basis for our argument later. For every $(L, R) \in \cK_{t, t}(A, B)$, observe that, since $\cA(L) = A$ and $\cA(R) = B$, we have $L \in \binom{A}{\ell}^t$ and $R \in \binom{B}{\ell}^t$. This implies that $\cK_{t, t}(A, B) \subseteq \binom{A}{\ell}^t \times \binom{B}{\ell}^t$. Hence, 
\begin{align} \label{inq:ctt-alar}
|\cK_{t, t}(A, B)| &\leqs \binom{|A|}{\ell}^t \binom{|B|}{\ell}^t.
\end{align}
Moreover, $\binom{|A|}{\ell} \binom{|B|}{\ell}$ can be further bounded as
\begin{align} \label{inq:alar-to-n}
\binom{|A|}{\ell} \binom{|B|}{\ell}
= \frac{1}{(\ell!)^2}\prod_{i=0}^{\ell-1}(|A| - i)(|B| - i)
\leqs \frac{1}{(\ell!)^2}\prod_{i=0}^{\ell-1}\left(\frac{|A| + |B|}{2} - i\right)^2\leqs \binom{n}{\ell}^2
\end{align}
where the inequalities come from the AM-GM Inequality and from $|A| + |B| \leqs 2n$ respectively.
Combining (\ref{inq:ctt-alar}) and (\ref{inq:alar-to-n}) indeed yields $|\cK_{t, t}(A, B)|\leqs \binom{n}{\ell}^{2t}$.

Inequality (\ref{inq:ctt-alar}) is very crude; we include all elements of $\binom{A}{\ell}$ and $\binom{B}{\ell}$ as candidates for vertices in $L$ and $R$ respectively. However, as we will see soon, only tiny fraction of elements of $\binom{A}{\ell}, \binom{B}{\ell}$ can actually appear in $L, R$ when $(L, R) \in \cK_{t, t}(A, B)$. To argue this, let us categorize the variables into three groups:
\begin{itemize}
\item $x$ is \emph{terrible} iff its assignments appear at most once in total in $A$ and $B$ (i.e. $|\{(x, 0), (x, 1)\} \cap A| + |\{(x, 0), (x, 1)\} \cap B| \leqs 1$).
\item $x$ is \emph{good} iff, for some $b \in \{0, 1\}$, $(x, b) \in A \cap B$. Note that this implies that $(x, \neg b) \notin A \cup B$.
\item $x$ is \emph{bad} iff either $\{(x, 0), (x, 1)\} \subseteq A$ or $\{(x, 0), (x, 1)\} \subseteq B$.
\end{itemize}

The next and last step of the proof is where birthday-type paradoxes come in. Before we continue, let us briefly demonstrate the ideas behind this step by considering the following extreme cases:
\begin{itemize}
\item If all variables are terrible, then $|A| + |B| \leqs n$ and (\ref{inq:alar-to-n}) can be immediately tightened.
\item If all variables are bad, assume w.l.o.g. that, for at least half of variables $x$'s, $\{(x, 0), (x, 1)\} \subseteq A$. Consider a random element $u$ of $\binom{A}{\ell}$. Since $u$ is a set of random $\ell$ distinct elements of $A$, there will, in expectation, be $\Omega(\ell^2/n)$ variables $x$'s with $(x, 0), (x, 1) \in u$. However, the presence of such $x$'s means that $u$ is not a valid vertex. Moreover, it is not hard to turn this into the following probabilistic statement: with probability at most $2^{-\Omega(\ell^2 / n)}$, $u$ contains at most one of $(x, 0), (x, 1)$ for every variable $x$. In other words, only $2^{-\Omega(\ell^2 / n)}$ fraction of elements of $\binom{A}{\ell}$ are valid vertices, which yields the desired bound on $|\cK_{t, t}(A, B)|$.
\item If all variables are good, then $A = B$ is simply an assignment to all the variables. Since $\val(\phi) \leqs 1 - \varepsilon$, at least $\epsilon m$ clauses are unsatisfied by this assignment. As we will argue below, every element of $\binom{A}{\ell}$ that contains two variables from some unsatisfied clause cannot be in $L$ for any $(L, R) \in \cK_{t, t}(A, B)$. This means that there are $\Theta_{\varepsilon}(m) \geqs \Omega_{\varepsilon}(n)$ prohibited pairs of variables that cannot appear together. Again, similar to the previous case, it is not hard to argue that only $2^{-\Omega_{\varepsilon, d}(\ell^2/n)}$ fraction of elements of $\binom{A}{\ell}$ can be candidates for vertices of $L$.
\end{itemize}

To turn this intuition into a bound on $|\cK_{t, t}(A, B)|$, we need the following inequality. Its proof is straightforward and is deferred to Subsection~\ref{subsec:birthday}.

\begin{proposition} \label{prop:birthday}
Let $U$ be any set and $P \subseteq \binom{U}{2}$ be any set of pairs of elements of $U$ such that each element of $U$ appears in at most $q$ pairs. For any positive integer $2 \leqs r \leqs |U|$, the probability that a random element of $\binom{U}{r}$ does not contain both elements of any pair in $P$ is at most $\exp\left(-\frac{|P|r^2}{4q|U|^2}\right)$.
\end{proposition}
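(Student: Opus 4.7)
The plan is to reduce the general case to the case of vertex-disjoint pairs via a greedy matching, and then bound the avoidance probability by exploiting the negative correlation of a uniform random $r$-subset of $U$.

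For the matching step, view $(U, P)$ as a graph of maximum degree $q$ and extract $M \subseteq P$ by repeatedly picking any remaining edge, adding it to $M$, and discarding every edge that shares an endpoint with it. Each round discards at most $1 + 2(q-1) = 2q-1$ edges, so the output satisfies $|M| \geqs |P|/(2q-1) \geqs |P|/(2q)$.

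Let $S$ be a uniformly random element of $\binom{U}{r}$ and enumerate the pairs of $M$ as $e_1, \ldots, e_{|M|}$. For each such pair $e_i = \{a,b\}$, a direct count gives $\Pr[\{a,b\} \subseteq S] = r(r-1)/(|U|(|U|-1)) \geqs r^2/(2|U|^2)$, where the final step uses $r \geqs 2$. Since the pairs in $M$ are pairwise vertex-disjoint, the decreasing events ``$S$ avoids $e_i$'' each depend on a disjoint pair of coordinates in $(\mathbf{1}_{u \in S})_{u \in U}$. These indicators form a negatively associated family (a classical property of uniform random subsets), so these decreasing events factorize upward, yielding
\[
\Pr\bigl[S \text{ avoids every } e_i\bigr] \leqs \prod_{i=1}^{|M|}\Pr[S \text{ avoids } e_i] \leqs \left(1 - \frac{r^2}{2|U|^2}\right)^{|M|} \leqs \exp\!\left(-\frac{|M|\, r^2}{2|U|^2}\right).
\]
Since avoiding every pair of $P$ implies avoiding every pair of $M$, substituting $|M| \geqs |P|/(2q)$ recovers the claimed bound $\exp\!\left(-|P| r^2 / (4 q |U|^2)\right)$.

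The only delicate point is the factorization in the last display. If one prefers not to invoke negative association as a black box, the same inequality can be derived by induction on $|M|$: one shows that conditioning on ``$S$ avoids $e_1, \ldots, e_{i-1}$'' can only \emph{decrease} the conditional probability that $S$ contains both endpoints of the fresh, disjoint pair $e_i$. This monotonicity can be verified by a short hypergeometric-style counting argument, grouping $r$-subsets according to how many endpoints of the earlier pairs they contain and noting that moving an element from a matched endpoint to a free vertex can only make the avoidance event easier to satisfy.
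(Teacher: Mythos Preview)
Your proof is correct and follows essentially the same route as the paper: greedily extract a matching $M\subseteq P$ with $|M|\geqs |P|/(2q)$, bound the single-pair avoidance probability by $1-r^2/(2|U|^2)$, and multiply using the disjointness of the matched pairs. The only difference is in justifying the product inequality---the paper simply asserts the conditional inequality $\Pr[\{a_i,b_i\}\not\subseteq u \mid \bigwedge_{j<i}\{a_j,b_j\}\not\subseteq u]\leqs \Pr[\{a_i,b_i\}\not\subseteq u]$ as ``not hard to see,'' whereas you invoke negative association of the membership indicators (and sketch the same conditional argument as an alternative); both are valid and amount to the same thing.
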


We are now ready to formalize the above intuition and finish the proof of Lemma~\ref{lem:ctt-bound}. For the sake of convenience, denote the sets of good, bad and terrible variables by $X_g, X_b$ and $X_t$ respectively. Moreover, let $\beta := \varepsilon / (100d)$ and pick $\lambda = \min\{- \log(1 - \beta / 2), \beta / 64, \varepsilon / (384d)\}$. To refine the bound on the size of $\cK_{t, t}(A, B)$, consider the following three cases:
\begin{enumerate}[itemsep=6pt]
\item $|X_t| \geqs \beta n$. Since each $x \in X_t$ contributes at most one to $|A| + |B|$, $|A| + |B| \leqs (1 - \beta/2)(2n)$. Hence, we can improve (\ref{inq:alar-to-n}) to $\binom{|A|}{\ell}\binom{|B|}{\ell} \leqs \binom{(1 - \beta/2)n}{\ell}^2$. Thus, we have
\begin{align*}
|\cK_{t, t}(A, B)| 
\overset{(\ref{inq:ctt-alar})}{\leqs} \binom{|A|}{\ell}^t\binom{|B|}{\ell}^t  \leqs
\binom{(1 - \beta/2)n}{\ell}^{2t} \leqs \left((1 - \beta/2)^\ell\binom{n}{\ell}\right)^{2t} \leqs \left(2^{-\lambda \ell^2 / n}\binom{n}{\ell}\right)^{2t}
\end{align*}
where the last inequality comes from $\lambda \leqs - \log(1 - \beta / 2)$ and $\ell > \ell^2 / n$.
\item $|X_b| \geqs \beta n$. Since each $x \in X_b$ appears either in $A$ or $B$, one of $A$ and $B$ must contain assignments to at least $(\beta / 2) n$ variables in $X_b$. Assume w.l.o.g. that $A$ satisfies this property. Let $X_b^L$ be the set of all $x \in X_b$ whose assignments appear in $A$. Note that $|X_b^L| \geqs (\beta / 2) n$.

Observe that an element $u \in \binom{A}{\ell}$ is not a valid vertex if it contains both $(x, 0)$ and $(x, 1)$ for some $x \in X_b^L$. We invoke Proposition~\ref{prop:birthday} with $U = A$, $P = \{\{(x, 0), (x, 1)\} \mid x \in X_b^L\}, q = 1$ and $r = \ell$, which implies that a random element of $\binom{A}{\ell}$ does not contain any prohibited pairs in $P$ with probability at most $\exp\left(-\frac{|X_b^L|\ell^2}{4|A|^2}\right) \leqs \exp\left(-\frac{(\beta/2)n\ell^2}{4(2n)^2}\right)$, which is at most $2^{-2\lambda \ell^2/n}$ because $\lambda \leqs \beta /64$. In other words, at most $2^{-2\lambda \ell^2/n}$ fraction of elements of $\binom{A}{\ell}$ are valid vertices. This gives us the following bound:
\begin{align*}
|\cK_{t, t}(A, B)| \leqs 
\left(2^{-2\lambda \ell^2/n} \cdot \binom{|A|}{\ell}\right)^t \cdot \binom{|B|}{\ell}^t \overset{(\ref{inq:alar-to-n})}{\leqs} \left(2^{-\lambda \ell^2/n} \binom{n}{\ell}\right)^{2t}.
\end{align*}
\item $|X_t| < \beta n$ and  $|X_b| < \beta n$. In this case, $|X_g| > (1 - 2 \beta)n$. Let $S$ denote the set of clauses whose variables all lie in $X_g$. Since each variable appears in at most $d$ clauses, $|S| > m - (2 \beta n) d \geqs (1 - \varepsilon / 2) m$ where the second inequality comes from our choice of $\beta$ and from $m \geqs n / 3$.

Consider the partial assignment $f: X_g \rightarrow \{0, 1\}$ induced by $A$ and $B$, i.e., $f(x) = b$ iff $(x, b) \in A, B$. Since $\val(\phi) \leqs 1 - \varepsilon$, the number of clauses in $S$ satisfied by $f$ is at most $(1 - \varepsilon)m$. Hence, at least $\varepsilon m / 2$ clauses in $S$ are unsatisfied by $f$. Denote the set of such clauses by $S_{\text{UNSAT}}$.

Fix a clause $C \in S_{\text{UNSAT}}$ and let $x, y$ be two different variables in $C$. We claim that $x, y$ cannot appear together in any vertex of $L$ for any $(L, R) \in \cK_{t, t}(A, B)$. Suppose for the sake of contradiction that $(x, f(x))$ and $(y, f(y))$ both appear in $u \in L$ for some $(L, R) \in \cK_{t, t}(A, B)$. Let $z \in X_g$ be another variable\footnote{If $C$ contains two variables, let $z = x$. Note that we can assume w.l.o.g. that $C$ contains at least two variables.} in $C$. Since $(z, f(z)) \in B$, some vertex $v \in R$ contains $(z, f(z))$. Thus, there is no edge between $u$ and $v$ in $G_{\phi, \ell}$, which contradicts with $(L, R) \in \cK_{t, t}$.


We can now appeal to Proposition~\ref{prop:birthday} with $U = A$, $q = 2d$, $r = \ell$ and $P$ be the prohibited pairs described above. This implies that with probability at most $\exp\left(-\frac{|P|\ell^2}{8d|A|^2}\right) \leqs \exp\left(-\frac{\varepsilon \ell^2}{192 d n}\right)$, a random element of $\binom{A}{\ell}$ contains no prohibited pair from $P$. In other words, at most $\exp\left(-\frac{\varepsilon \ell^2}{192 d n}\right)$ fraction of elements of $\binom{A}{\ell}$ can be candidates for each element of $L$ for $(L, R) \in \cK_{t, t}(A, B)$. This gives the following bound:
\begin{align*}
|\cK_{t, t}(A, B)| \leqs 
\left(\exp\left(-\frac{\varepsilon \ell^2}{192 d n}\right) \cdot \binom{|A|}{\ell}\right)^t \cdot \binom{|B|}{\ell}^t \overset{(\ref{inq:alar-to-n})}{\leqs} \left(2^{- \varepsilon \ell^2 / (384 dn)} \cdot \binom{n}{\ell}\right)^{2t}.
\end{align*}

Since we picked $\lambda \leqs \varepsilon / (384d)$, $|\cK_{t, t}(A, B)|$ is once again bounded above by $\left(2^{-\lambda \ell^2 / n}\binom{n}{\ell}\right)^{2t}$.
\end{enumerate}
In all three cases, we have $|\cK_{t, t}(A, B)| \leqs \left(2^{-\lambda \ell^2 / n}\binom{n}{\ell}\right)^{2t}$, completing the proof of Lemma~\ref{lem:ctt-bound}.
\end{proof}

\subsection{Proof of Proposition~\ref{prop:birthday}} \label{subsec:birthday}

\begin{proof}[Proof of Proposition~\ref{prop:birthday}]
We first construct $P' \subseteq P$ such that each element of $U$ appears in at most one pair in $P'$ as follows. Start out by marking every pair in $P$ as active and, as long as there are active pairs left, include one in $P'$ and mark every pair that shares an element of $U$ with this pair as inactive. Since each element of $U$ appears in at most $q$ pairs in $P$, we mark at most $2q$ pairs as inactive per each inclusion. This implies that $|P'| \geqs |P| / (2q)$.

Suppose that $P' = \{\{a_1, b_1\}, \dots, \{a_{|P'|}, b_{|P'|}\}\}$ where $a_1, b_1, \dots, a_{|P'|}, b_{|P'|}$ are distinct elements of $U$. Let $u$ be a random element of $\binom{U}{r}$. For each $i = 1, \dots, |P'|$, we have
\begin{align*}
\Pr[\{a_i, b_i\} \not\subseteq u] 
&= 1 - \frac{\binom{|U| - 2}{r - 2}}{\binom{|U|}{r}} \\
&= 1 - \frac{r(r - 1)}{|U|(|U| - 1)} \\
(\text{Since } r - 1 \geqs r / 2 \text{ for all } r \geqs 2)&\leqs 1 - \frac{r^2}{2|U|^2} \\
(\text{Since } 1 - z \leqs \exp(-z) \text{ for all } z \in \mathbb{R}) &\leqs \exp\left(-\frac{r^2}{2|U|^2}\right).
\end{align*}

If $u$ does not contain both elements of any pairs in $P$, it does not contain both elements of any pairs in $P'$. The probability of the latter can be written as $$\Pr\left[\bigwedge_{i=1}^{|P'|} \{a_i, b_i\} \not\subseteq u\right] = \prod_{i=1}^{|P'|} \Pr\left[\{a_i, b_i\} \not\subseteq u \middle|\: \bigwedge_{j=1}^{i-1} \{a_j, b_j\} \not\subseteq u\right].$$ In addition, since $a_1, b_1, \dots, a_{|P'|}, b_{|P'|}$ are distinct, it is not hard to see that $\Pr\left[\{a_i, b_i\} \not\subseteq u \middle|\: \bigwedge_{j=1}^{i-1} \{a_j, b_j\} \not\subseteq u\right] \leqs \Pr[\{a_i, b_i\} \not\subseteq u]$. Hence, we have
\begin{align*}
\Pr\left[\bigwedge_{i=1}^{|P'|} \{a_i, b_i\} \not\subseteq u\right]
= \prod_{i=1}^{|P'|} \Pr[\{a_i, b_i\} \not\subseteq u]
\leqs \left(\exp\left(-\frac{r^2}{2|U|^2}\right)\right)^{|P'|}
= \exp\left(-\frac{|P'|r^2}{2|U|^2}\right)
\leqs \exp\left(-\frac{|P|r^2}{4q|U|^2}\right),
\end{align*}
completing the proof of Proposition~\ref{prop:birthday}.
\end{proof}

\subsection{Proofs of Inapproximability Results of {\sc D$k$S}} \label{subsec:dks-hardness}

In this subsection, we prove Theorem~\ref{thm:main-eth} and~\ref{thm:main-gap-eth}. The proof of Theorem~\ref{thm:main-eth} is simply by combining Dinur's PCP Theorem and Theorem~\ref{thm:main-red} with $\ell = m / \polylog m$, as stated below.

\begin{proof}[Proof of Theorem~\ref{thm:main-eth}]
For any 3SAT formula $\varphi$ with $m$ clauses, use Theorem~\ref{thm:dinur-pcp} to produce $\phi$ with $m' = O(m \polylog m)$ clauses such that each variable appears in at most $d$ clauses. Let $\zeta$ be a constant such that $m' = O(m \log^{\zeta} m)$ and let $\ell = m / \log^2 m$. Let us consider the graph $G_{\phi, \ell}$ with $k = \binom{n}{\ell}$ where $n$ is the number of variables of $\phi$. Let $N$ be the number of vertices of $G_{\phi, \ell}$. Observe that $N = 2^{\ell}\binom{n}{\ell} \leqs n^{2\ell} \leqs (m')^{O(\ell)} = 2^{O(\ell \log m')} = 2^{o(m)}.$

If $\varphi$ is satisfiable, $\phi$ is also satisfiable and it is obvious that $G_{\phi, \ell}$ contains an induced $k$-clique. Otherwise, If $\varphi$ is unsatisfiable, $\val(\phi) \leqs 1 - \varepsilon$. From Theorem~\ref{thm:main-red}, any $k$-subgraph of $G_{\phi, \ell}$ has density at most $2^{-\Omega(\ell^4 / n^3)} \leqs 2^{-\Omega(m / \log^{3\zeta + 8} m)} = N^{-\Omega(1 / (\loglog N)^{3\zeta + 8})}$, which is at most $N^{-1/(\loglog N)^{3\zeta + 9}}$ when $m$ is sufficiently large. Hence, if there is a polynomial-time algorithm that can distinguish between the two cases in Theorem~\ref{thm:main-eth} when $c = 3\zeta + 9$, then there also exists an algorithm that solves 3SAT in time $2^{o(m)}$, contradicting with ETH. 
\end{proof}

The proof of Theorem~\ref{thm:main-gap-eth} is even simpler since, under Gap-ETH, we have the gap version of 3SAT to begin with. Hence, we can directly apply Theorem~\ref{thm:main-red} without going through Dinur's PCP:

\begin{proof}[Proof of Theorem~\ref{thm:main-gap-eth}]

Let $\phi$ be any 3SAT formula with $m$ clauses such that each variable appears in $O(1)$ clauses\footnote{We can assume w.l.o.g. that each variable appears in at most $O(1)$ clauses~\cite[p.21]{MR16}.}. Let $\ell = m \sqrt[5]{f(m)}$ and consider the graph $G_{\phi, \ell}$ with $k = \binom{n}{\ell}$ where $n$ is the number of variables of $\phi$. The number of vertices $N$ of $G_{\phi, \ell}$ is $2^{\ell}\binom{n}{\ell} \leqs 2^\ell\left(\frac{en}{\ell}\right)^\ell \leqs 2^{O(\ell \log(m/\ell))} = 2^{O(m\sqrt[5]{f(m)}\log(1/f(m)))} = 2^{o(m)}$ where the last inequality follows from $f \in o(1)$.

The completeness is again obvious. For the soundness, if $\val(\phi) \leqs 1 - \varepsilon$, from Theorem~\ref{thm:main-red}, any $k$-subgraph of $G_{\phi, \ell}$ has density at most $2^{-\Omega(\ell^4 / n^3)} \leqs 2^{-\Omega(m f(m)^{4/5})} \leqs N^{-\Omega(f(m)^{4/5})}$, which is at most\footnote{Assume w.l.o.g. that $f$ is decreasing; otherwise take $f'(m) = \sup_{m' \geqs m} f(m')$ instead.} $N^{-f(N)}$ when $m$ is sufficiently large. Hence, if there is a polynomial-time algorithm that can distinguish between the two cases in Theorem~\ref{thm:main-gap-eth}, then there also exists an algorithm that solves the gap version of 3SAT in time $2^{o(m)}$, contradicting with Gap-ETH.
\end{proof}

\section{Conclusion and Open Questions} \label{sec:open}

In this work, we provide a subexponential time reduction from the gap version of 3SAT to {\sc D$k$S} and prove that it establishes an almost-polynomial ratio hardness of approximation of the latter under ETH and Gap-ETH. Even with our results, however, approximability of {\sc D$k$S} still remains wide open. Namely, it is still not known whether it is NP-hard to approximate {\sc D$k$S} to within some constant factor, and, no polynomial ratio hardness of approximation is yet known. 

Although our results appear to almost resolve the second question, it still seems out of reach with our current knowledge of hardness of approximation. In particular, to achieve a polynomial ratio hardness for {\sc D$k$S}, it is plausible that one has to prove a long-standing conjecture called \emph{the sliding scale conjecture (SSC)}~\cite{BGLR93}. In short, SSC essentially states that {\sc Label Cover}, a problem used as starting points of almost all NP-hardness of approximation results, is NP-hard to approximate to within some polynomial ratio. Note here that polynomial ratio hardness for {\sc Label Cover} is not even known under stronger assumptions such as ETH or Gap-ETH; we refer the readers to~\cite{Din16} for more detailed discussions on the topic.


Apart from the approximability of {\sc D$k$S}, our results also prompt the following natural question: since previous techniques, such as Feige's Random 3SAT Hypothesis~\cite{Feige02}, Khot's Quasi-Random PCP~\cite{Kho06}, Unique Games with Small Set Expansion Conjecture~\cite{RS10} and the Planted Clique Hypothesis~\cite{Jer92,Kuc95}, that were successful in showing inapproximability of {\sc D$k$S} also gave rise to hardnesses of approximation of many problems that are not known to be APX-hard including {\sc Sparsest Cut}, {\sc Min Bisection}, {\sc Balanced Separator}, {\sc Minimum Linear Arrangement} and {\sc 2-Catalog Segmentation}~\cite{AMS07,Saket10,RST12}, is it possible to modify our construction to prove inapproximability for these problems as well? An evidence suggesting that this may be possible is the case of $\varepsilon$-approximate Nash Equilibrium with $\varepsilon$-optimal welfare, which was first proved to be hard under the Planted Clique Hypothesis by Hazan and Krauthgamer~\cite{HK11} before Braverman, Ko and Weinstein proved that the problem was also hard under ETH~\cite{BKW15}.


\subsection*{Acknowledgement}

I am truly grateful to Aviad Rubinstein, Prasad Raghavendra and Luca Trevisan for invaluable discussions throughout various stages of this work. Without their comments, suggestions and support, this work would not have been possible. Furthermore, I thank Daniel Reichman and Igor Shinkar for stimulating dicussions on a related problem which inspire part of the proof presented here. I also thank Igor for pointing me to~\cite{Alon02}. Finally, I thank anonymous reviewers for their useful comments on an earlier draft of this work.

\bibliographystyle{alpha}
\bibliography{main}

\newcommand{\etalchar}[1]{$^{#1}$}
\begin{thebibliography}{CMMV17}

\bibitem[AAM{\etalchar{+}}11]{AAMMW11}
Noga Alon, Sanjeev Arora, Rajsekar Manokaran, Dana Moshkovitz, and Omri
  Weinstein.
\newblock Inapproximabilty of densest $k$-subgraph from average case hardness.
\newblock Unpublished Manuscript, 2011.

\bibitem[AC09]{AC09}
Reid Andersen and Kumar Chellapilla.
\newblock Finding dense subgraphs with size bounds.
\newblock In {\em Algorithms and Models for the Web-Graph, 6th International
  Workshop, {WAW} 2009, Barcelona, Spain, February 12-13, 2009. Proceedings},
  pages 25--37, 2009.

\bibitem[AHI02]{AHI02}
Yuichi Asahiro, Refael Hassin, and Kazuo Iwama.
\newblock Complexity of finding dense subgraphs.
\newblock {\em Discrete Applied Mathematics}, 121(1–3):15 -- 26, 2002.

\bibitem[AIM14]{AIM14}
Scott Aaronson, Russell Impagliazzo, and Dana Moshkovitz.
\newblock {AM} with multiple merlins.
\newblock In {\em {IEEE} 29th Conference on Computational Complexity, {CCC}
  2014, Vancouver, BC, Canada, June 11-13, 2014}, pages 44--55, 2014.

\bibitem[ALM{\etalchar{+}}98]{ALMSS98}
Sanjeev Arora, Carsten Lund, Rajeev Motwani, Madhu Sudan, and Mario Szegedy.
\newblock Proof verification and the hardness of approximation problems.
\newblock {\em J. ACM}, 45(3):501--555, May 1998.

\bibitem[Alo02]{Alon02}
Noga Alon.
\newblock Testing subgraphs in large graphs.
\newblock {\em Random Struct. Algorithms}, 21(3-4):359--370, 2002.

\bibitem[AMS07]{AMS07}
Christoph Ambuhl, Monaldo Mastrolilli, and Ola Svensson.
\newblock Inapproximability results for sparsest cut, optimal linear
  arrangement, and precedence constrained scheduling.
\newblock In {\em Foundations of Computer Science, 2007. FOCS '07. 48th Annual
  IEEE Symposium on}, pages 329--337, Oct 2007.

\bibitem[AS98]{AS98}
Sanjeev Arora and Shmuel Safra.
\newblock Probabilistic checking of proofs: A new characterization of {NP}.
\newblock {\em J. ACM}, 45(1):70--122, January 1998.

\bibitem[Bar15]{Barman15}
Siddharth Barman.
\newblock Approximating {Nash} equilibria and dense bipartite subgraphs via an
  approximate version of {Caratheodory's Theorem}.
\newblock In {\em Proceedings of the Forty-Seventh Annual ACM on Symposium on
  Theory of Computing}, STOC '15, pages 361--369, New York, NY, USA, 2015. ACM.

\bibitem[BCC{\etalchar{+}}10]{BCCFV10}
Aditya Bhaskara, Moses Charikar, Eden Chlamtac, Uriel Feige, and Aravindan
  Vijayaraghavan.
\newblock Detecting high log-densities: an {$O(n^{1/4})$} approximation for
  densest $k$-subgraph.
\newblock In {\em Proceedings of the 42nd {ACM} Symposium on Theory of
  Computing, {STOC} 2010, Cambridge, Massachusetts, USA, 5-8 June 2010}, pages
  201--210, 2010.

\bibitem[BCV{\etalchar{+}}12]{BCVGZ12}
Aditya Bhaskara, Moses Charikar, Aravindan Vijayaraghavan, Venkatesan
  Guruswami, and Yuan Zhou.
\newblock Polynomial integrality gaps for strong {SDP} relaxations of densest
  $k$-subgraph.
\newblock In {\em Proceedings of the Twenty-third Annual ACM-SIAM Symposium on
  Discrete Algorithms}, SODA '12, pages 388--405, Philadelphia, PA, USA, 2012.
  Society for Industrial and Applied Mathematics.

\bibitem[BGLR93]{BGLR93}
Mihir Bellare, Shafi Goldwasser, Carsten Lund, and Alexander Russell.
\newblock Efficient probabilistically checkable proofs and applications to
  approximations.
\newblock In {\em Proceedings of the Twenty-fifth Annual ACM Symposium on
  Theory of Computing}, STOC '93, pages 294--304, New York, NY, USA, 1993. ACM.

\bibitem[BKRW17]{BKRW17}
Mark Braverman, Young Kun{-}Ko, Aviad Rubinstein, and Omri Weinstein.
\newblock {ETH} hardness for densest-$k$-subgraph with perfect completeness.
\newblock In {\em Proceedings of the Twenty-eight Annual ACM-SIAM Symposium on
  Discrete Algorithms}, SODA '17, 2017.
\newblock To appear.

\bibitem[BKW15]{BKW15}
Mark Braverman, Young Kun{-}Ko, and Omri Weinstein.
\newblock Approximating the best {Nash} equilibrium in
  \emph{n\({}^{\mbox{o}}\)}\({}^{\mbox{(log \emph{n})}}\)-time breaks the
  exponential time hypothesis.
\newblock In {\em Proceedings of the Twenty-Sixth Annual {ACM-SIAM} Symposium
  on Discrete Algorithms, {SODA} 2015, San Diego, CA, USA, January 4-6, 2015},
  pages 970--982, 2015.

\bibitem[BPR16]{BPR16}
Yakov Babichenko, Christos~H. Papadimitriou, and Aviad Rubinstein.
\newblock Can almost everybody be almost happy?
\newblock In {\em Proceedings of the 2016 {ACM} Conference on Innovations in
  Theoretical Computer Science, Cambridge, MA, USA, January 14-16, 2016}, pages
  1--9, 2016.

\bibitem[BSS08]{BS08}
Eli Ben-Sasson and Madhu Sudan.
\newblock Short {PCPs} with polylog query complexity.
\newblock {\em SIAM J. Comput.}, 38(2):551--607, May 2008.

\bibitem[CDK12]{CDK12}
Eden Chlamtac, Michael Dinitz, and Robert Krauthgamer.
\newblock Everywhere-sparse spanners via dense subgraphs.
\newblock In {\em 53rd Annual {IEEE} Symposium on Foundations of Computer
  Science, {FOCS} 2012, New Brunswick, NJ, USA, October 20-23, 2012}, pages
  758--767, 2012.

\bibitem[CDK{\etalchar{+}}16]{CDKKR16}
Eden Chlamt{\'{a}}c, Michael Dinitz, Christian Konrad, Guy Kortsarz, and George
  Rabanca.
\newblock The densest $k$-subhypergraph problem.
\newblock In {\em Approximation, Randomization, and Combinatorial Optimization.
  Algorithms and Techniques, {APPROX/RANDOM} 2016, September 7-9, 2016, Paris,
  France}, pages 6:1--6:19, 2016.

\bibitem[CHK11]{CHK}
Moses Charikar, MohammadTaghi Hajiaghayi, and Howard Karloff.
\newblock Improved approximation algorithms for label cover problems.
\newblock volume~61, pages 190--206. Springer US, 2011.

\bibitem[CL15]{CL15}
Chandra Chekuri and Shi Li.
\newblock A note on the hardness of approximating the {\sc $k$-way hypergraph
  cut} problem.
\newblock Unpublished Manuscript, 2015.

\bibitem[CLLR15]{CLLR15}
Wei Chen, Fu~Li, Tian Lin, and Aviad Rubinstein.
\newblock Combining traditional marketing and viral marketing with amphibious
  influence maximization.
\newblock In {\em Proceedings of the Sixteenth ACM Conference on Economics and
  Computation}, EC '15, pages 779--796, New York, NY, USA, 2015. ACM.

\bibitem[CMMV17]{CMMV17}
Eden Chlamt{\'{a}}c, Pasin Manurangsi, Dana Moshkovitz, and Aravindan
  Vijayaraghavan.
\newblock Approximation algorithms for label cover and the log-density
  threshold.
\newblock In {\em Proceedings of the Twenty-eight Annual ACM-SIAM Symposium on
  Discrete Algorithms}, SODA '17, 2017.
\newblock To appear.

\bibitem[CMVZ15]{CMVZ16}
Julia Chuzhoy, Yury Makarychev, Aravindan Vijayaraghavan, and Yuan Zhou.
\newblock Approximation algorithms and hardness of the $k$-route cut problem.
\newblock {\em ACM Trans. Algorithms}, 12(1):2:1--2:40, December 2015.

\bibitem[CZ15]{CZ15}
Stephen~R. Chestnut and Rico Zenklusen.
\newblock Hardness and approximation for network flow interdiction.
\newblock {\em CoRR}, abs/1511.02486, 2015.

\bibitem[DFS16]{DFS16}
Argyrios Deligkas, John Fearnley, and Rahul Savani.
\newblock Inapproximability results for approximate {Nash} equilibria.
\newblock {\em CoRR}, abs/1608.03574, 2016.

\bibitem[Din07]{Din07}
Irit Dinur.
\newblock The {PCP} theorem by gap amplification.
\newblock {\em J. ACM}, 54(3), June 2007.

\bibitem[Din16]{Din16}
Irit Dinur.
\newblock Mildly exponential reduction from gap {3SAT} to polynomial-gap
  label-cover.
\newblock {\em Electronic Colloquium on Computational Complexity {(ECCC)}},
  23:128, 2016.

\bibitem[Fei02]{Feige02}
Uriel Feige.
\newblock Relations between average case complexity and approximation
  complexity.
\newblock In {\em Proceedings of the Thiry-fourth Annual ACM Symposium on
  Theory of Computing}, STOC '02, pages 534--543, New York, NY, USA, 2002. ACM.

\bibitem[FGL{\etalchar{+}}91]{FGLSS}
Uriel Feige, Shafi Goldwasser, László Lov\'{a}sz, Shmuel Safra, and Mario
  Szegedy.
\newblock Approximating clique is almost {NP}-complete (preliminary version).
\newblock In {\em Proceedings of the 32Nd Annual Symposium on Foundations of
  Computer Science}, SFCS '91, pages 2--12, Washington, DC, USA, 1991. IEEE
  Computer Society.

\bibitem[FKP01]{FPK01}
Uriel Feige, Guy Kortsarz, and David Peleg.
\newblock The dense \emph{k}-subgraph problem.
\newblock {\em Algorithmica}, 29(3):410--421, 2001.

\bibitem[FL01]{FL01}
Uriel Feige and Michael Langberg.
\newblock Approximation algorithms for maximization problems arising in graph
  partitioning.
\newblock {\em J. Algorithms}, 41(2):174--211, November 2001.

\bibitem[FS97]{FS97}
Uriel Feige and Michael Seltser.
\newblock On the densest $k$-subgraph problem.
\newblock Technical report, Weizmann Institute of Science, Rehovot, Israel,
  1997.

\bibitem[GL09]{GL09}
Doron Goldstein and Michael Langberg.
\newblock The dense $k$ subgraph problem.
\newblock {\em CoRR}, abs/0912.5327, 2009.

\bibitem[HIM11]{HIM11}
Koki Hamada, Kazuo Iwama, and Shuichi Miyazaki.
\newblock The hospitals/residents problem with quota lower bounds.
\newblock In {\em Algorithms - {ESA} 2011 - 19th Annual European Symposium,
  Saarbr{\"{u}}cken, Germany, September 5-9, 2011. Proceedings}, pages
  180--191, 2011.

\bibitem[HJ06]{HJ06}
Mohammad~Taghi Hajiaghayi and Kamal Jain.
\newblock The prize-collecting generalized steiner tree problem via a new
  approach of primal-dual schema.
\newblock In {\em Proceedings of the Seventeenth Annual ACM-SIAM Symposium on
  Discrete Algorithm}, SODA '06, pages 631--640, Philadelphia, PA, USA, 2006.
  Society for Industrial and Applied Mathematics.

\bibitem[HJL{\etalchar{+}}06]{HJL06}
Mohammad~Taghi Hajiaghayi, Kamal Jain, Lap~Chi Lau, Ion~I. Mandoiu, Alexander
  Russell, and Vijay~V. Vazirani.
\newblock Minimum multicolored subgraph problem in multiplex {PCR} primer set
  selection and population haplotyping.
\newblock In {\em Computational Science - {ICCS} 2006, 6th International
  Conference, Reading, UK, May 28-31, 2006, Proceedings, Part {II}}, pages
  758--766, 2006.

\bibitem[HK11]{HK11}
Elad Hazan and Robert Krauthgamer.
\newblock How hard is it to approximate the best {Nash} equilibrium?
\newblock {\em SIAM Journal on Computing}, 40(1):79--91, 2011.

\bibitem[IP01]{IP01}
Russell Impagliazzo and Ramamohan Paturi.
\newblock On the complexity of $k$-{SAT}.
\newblock {\em J. Comput. Syst. Sci.}, 62(2):367--375, March 2001.

\bibitem[IPZ01]{IPZ01}
Russell Impagliazzo, Ramamohan Paturi, and Francis Zane.
\newblock Which problems have strongly exponential complexity?
\newblock {\em J. Comput. Syst. Sci.}, 63(4):512--530, December 2001.

\bibitem[Jer92]{Jer92}
Mark Jerrum.
\newblock Large cliques elude the metropolis process.
\newblock {\em Random Structures \& Algorithms}, 3(4):347--359, 1992.

\bibitem[Kar72]{Karp72}
Richard~M. Karp.
\newblock Reducibility among combinatorial problems.
\newblock In {\em Proceedings of a symposium on the Complexity of Computer
  Computations, held March 20-22, 1972, at the {IBM} Thomas J. Watson Research
  Center, Yorktown Heights, New York.}, pages 85--103, 1972.

\bibitem[Kho06]{Kho06}
Subhash Khot.
\newblock Ruling out {PTAS} for graph min-bisection, dense $k$-subgraph, and
  bipartite clique.
\newblock {\em {SIAM} J. Comput.}, 36(4):1025--1071, 2006.

\bibitem[KMNT11]{KMNT08}
Guy Kortsarz, Vahab~S. Mirrokni, Zeev Nutov, and Elena Tsanko.
\newblock Approximating minimum-power degree and connectivity problems.
\newblock {\em Algorithmica}, 60(4):735--742, 2011.

\bibitem[KP93]{KP93}
Guy Kortsarz and David Peleg.
\newblock On choosing a dense subgraph (extended abstract).
\newblock In {\em 34th Annual Symposium on Foundations of Computer Science,
  Palo Alto, California, USA, 3-5 November 1993}, pages 692--701, 1993.

\bibitem[KS07]{KS07}
Stavros~G. Kolliopoulos and George Steiner.
\newblock Partially ordered knapsack and applications to scheduling.
\newblock {\em Discrete Applied Mathematics}, 155(8):889 -- 897, 2007.

\bibitem[Ku{\v{c}}95]{Kuc95}
Lud{\v{e}}k Ku{\v{c}}era.
\newblock Expected complexity of graph partitioning problems.
\newblock {\em Discrete Appl. Math.}, 57(2-3):193--212, February 1995.

\bibitem[Lee16]{Lee16}
Euiwoong Lee.
\newblock Partitioning a graph into small pieces with applications to path
  transversal.
\newblock {\em CoRR}, abs/1607.05122, 2016.

\bibitem[LNV14]{LNV14}
Zhentao Li, Manikandan Narayanan, and Adrian Vetta.
\newblock The complexity of the simultaneous cluster problem.
\newblock {\em Journal of Graph Algorithms and Applications}, 18(1):1--34,
  2014.

\bibitem[Man15]{M-thesis}
Pasin Manurangsi.
\newblock On approximating projection games.
\newblock Master's thesis, Massachusetts Institute of Technology, January 2015.

\bibitem[MM15]{MM15}
Pasin Manurangsi and Dana Moshkovitz.
\newblock Approximating dense {Max} 2-{CSP}s.
\newblock In {\em Approximation, Randomization, and Combinatorial Optimization.
  Algorithms and Techniques, {APPROX/RANDOM} 2015, August 24-26, 2015,
  Princeton, NJ, {USA}}, pages 396--415, 2015.

\bibitem[MR08]{MR08}
Dana Moshkovitz and Ran Raz.
\newblock Two-query {PCP} with subconstant error.
\newblock {\em J. ACM}, 57(5):29:1--29:29, June 2008.

\bibitem[MR16]{MR16}
Pasin Manurangsi and Prasad Raghavendra.
\newblock A birthday repetition theorem and complexity of approximating dense
  {CSP}s.
\newblock {\em CoRR}, abs/1607.02986, 2016.

\bibitem[Pis07]{Pis07}
David Pisinger.
\newblock The quadratic knapsack problem—a survey.
\newblock {\em Discrete Applied Mathematics}, 155(5):623 -- 648, 2007.

\bibitem[RS10]{RS10}
Prasad Raghavendra and David Steurer.
\newblock Graph expansion and the unique games conjecture.
\newblock In {\em Proceedings of the Forty-second ACM Symposium on Theory of
  Computing}, STOC '10, pages 755--764, New York, NY, USA, 2010. ACM.

\bibitem[RST12]{RST12}
Prasad Raghavendra, David Steurer, and Madhur Tulsiani.
\newblock Reductions between expansion problems.
\newblock In {\em Proceedings of the 27th Conference on Computational
  Complexity, {CCC} 2012, Porto, Portugal, June 26-29, 2012}, pages 64--73,
  2012.

\bibitem[Rub15]{R15}
Aviad Rubinstein.
\newblock {ETH}-hardness for signaling in symmetric zero-sum games.
\newblock {\em CoRR}, abs/1510.04991, 2015.

\bibitem[Rub16a]{Rub16com}
Aviad Rubinstein.
\newblock Personal communication, 2016.

\bibitem[Rub16b]{R16}
Aviad Rubinstein.
\newblock Settling the complexity of computing approximate two-player {Nash}
  equilibria.
\newblock {\em CoRR}, abs/1606.04550, 2016.

\bibitem[Sak10]{Saket10}
Rishi Saket.
\newblock Quasi-random {PCP} and hardness of 2-catalog segmentation.
\newblock In {\em {IARCS} Annual Conference on Foundations of Software
  Technology and Theoretical Computer Science, {FSTTCS} 2010, December 15-18,
  2010, Chennai, India}, pages 447--458, 2010.

\bibitem[SFL15]{SFL15}
Piotr Skowron, Piotr Faliszewski, and Jerome Lang.
\newblock Finding a collective set of items: From proportional
  multirepresentation to group recommendation.
\newblock In {\em Proceedings of the Twenty-Ninth AAAI Conference on Artificial
  Intelligence}, AAAI'15, pages 2131--2137. AAAI Press, 2015.

\bibitem[ST08]{ST05}
Akiko Suzuki and Takeshi Tokuyama.
\newblock Dense subgraph problems with output-density conditions.
\newblock {\em ACM Trans. Algorithms}, 4(4):43:1--43:18, August 2008.

\bibitem[SW98]{SW98}
Anand Srivastav and Katja Wolf.
\newblock Finding dense subgraphs with semidefinite programming.
\newblock In {\em Proceedings of the International Workshop on Approximation
  Algorithms for Combinatorial Optimization}, APPROX '98, pages 181--191,
  London, UK, UK, 1998. Springer-Verlag.

\bibitem[TV15]{TV15}
Sumedh Tirodkar and Sundar Vishwanathan.
\newblock On the approximability of the minimum rainbow subgraph problem and
  other related problems.
\newblock In {\em Algorithms and Computation - 26th International Symposium,
  {ISAAC} 2015, Nagoya, Japan, December 9-11, 2015, Proceedings}, pages
  106--115, 2015.

\end{thebibliography}

\appendix

\section{A Counterexample to Obtaining a Subconstant Soundness from Non-Boolean CSPs} \label{app:counter-bkrw}

Here we sketch an example due to Rubinstein~\cite{Rub16com} of a non-boolean 2CSP $\phi$ with low value for which the graph $G_{\phi, \ell}$ contains a large biclique. For a non-boolean 2CSP, we define the graph $G_{\phi, \ell}$ similar to that of a 3SAT formula except that now the vertices contains all $\{(x_{i_1}, \sigma_{i_1}), \dots, (x_{i_\ell}, \sigma_{i_\ell})\}$ for all distinct variables $x_{i_1}, \dots, x_{i_\ell}$ and all $\sigma_{i_1}, \dots, \sigma_{i_\ell} \in \Sigma$ where $\Sigma$ is the alphabet of the CSP.


Consider any non-boolean 2CSP instance $\phi$ on variables $x_1, \dots, x_n$ such that there is no constraint between $X_1 := \{x_1, \dots, x_{n/2}\}$ and $X_2 = \{x_{n/2 + 1}, \dots, x_{n}\}$ and each variable appears in $\leqs d$ constraints. Let $L$ the set of all vertices $u$ such that every variable in $u$ belongs to $X_1$ and no constraint is contained in $u$. Define $R$ similarly for $X_2$. Clearly, $(L, R)$ forms a biclique and it is not hard to see that $|L|, |R| \geqs |\Sigma|^{\ell}\binom{n/2 - (d + 1)\ell}{\ell}$. Since $|\Sigma| \geqs 3$, this value is $\geqs \binom{n}{\ell}$ for all $\ell \leqs \frac{n}{6(d + 2)}$. Hence, for such $\ell$, $G_{\phi, \ell}$ contains a biclique of size $\binom{n}{\ell}$. \\


Finally, note that there are several ways to define constraints within $X_1$ and $X_2$ so that $\val(\phi)$ is bounded away from one. For instance, we can make each side a random 2-XOR formula, which results in $\val(\phi) \leqs 1/2 + O(1/d)$. Thus, if we start from a non-boolean CSP, the largest gap we can hope to get is only two.


Note that the instance above is rather extreme as it consists of two disconnected components. Hence, it is still possible that, if the starting CSP has more specific properties (e.g. expanding constraint graph), then one can arrive at a gap of more than two.

\end{document}